\def\balpha{{\boldsymbol{\alpha}}}
\def\blim#1{\ensuremath{{\boldsymbol{Lim}(#1)}}}
\let\doendproof\endproof
\renewcommand\endproof{~\hfill~$\Box$\doendproof}
\definecolor{dkgreen}{rgb}{0,0.6,0}
\definecolor{gray}{rgb}{0.5,0.5,0.5}
\definecolor{mauve}{rgb}{0.58,0,0.82}
\lstdefinestyle{myScalastyle}{
	frame=tb,
	language=scala,
	aboveskip=3mm,
	belowskip=3mm,
	showstringspaces=false,
	columns=flexible,
	basicstyle={\small\ttfamily},
	numbers=none,
	numberstyle=\tiny\color{gray},
	keywordstyle=\color{blue},
	commentstyle=\color{dkgreen},
	stringstyle=\color{mauve},
	frame=single,
	breaklines=true,
	breakatwhitespace=true,
	tabsize=3,
	numbers=left,
	deletekeywords={for},
	otherkeywords={function,==,=, Let}
}
\begin{document}
	\title{The order type of scattered context-free orderings of rank one is computable}
	\author{Kitti~Gelle,~Szabolcs~Iv\'an}
	\institute{Department of Computer Science, University of Szeged, Hungary\\ \email{\{kgelle,szabivan\}@inf.u-szeged.hu}}
	
	\maketitle
	\begin{abstract}
	A linear ordering is called context-free if it is the lexicographic ordering of some context-free language
	and is called scattered if it has no dense subordering. Each scattered ordering has an associated ordinal,
	called its rank. It is known that the isomorphism problem of scattered context-free orderings is
	undecidable, if one of them has a rank at least two. In this paper we show that it is decidable
	whether a context-free ordering has rank at most one, and if so, its order type is effectively computable.
	\end{abstract}

\section{Introduction}
If an alphabet $\Sigma$ is equipped by a linear order $<$, this order can be extended to the lexicographic
ordering $<_\ell$ on $\Sigma^*$ as $u<_\ell v$ if and only if either $u$ is a proper prefix of $v$ or
$u=xay$ and $v=xbz$ for some $x,y,z\in\Sigma^*$ and letters $a<b$. So any language $L\subseteq \Sigma^*$
can be viewed as a linear ordering $(L,<_\ell)$. Since $\{a,b\}^*$ contains the dense ordering
$(aa+bb)^*ab$ and every countable linear ordering can be embedded into any countably infinite dense ordering,
every countable linear ordering is isomorphic to one of the form $(L,<_\ell)$ for some language $L\subseteq\{a,b\}^*$.
A linear ordering (or an order type) is called \emph{regular} or \emph{context-free}
if it is isomorphic to the linear ordering (or, is the order type) of some language of the appropriate type.
It is known~\cite{DBLP:journals/fuin/BloomE10} that an ordinal is regular if and only if it is less than $\omega^\omega$
and is context-free if and only if it is less than $\omega^{\omega^\omega}$. Also, the Hausdorff rank~\cite{rosenstein}
of any scattered regular (context-free, resp.) ordering is less than $\omega$ ($\omega^\omega$, resp)~\cite{ITA_1980__14_2_131_0,10.1007/978-3-642-29344-3_25}.

It is known~\cite{GelleIvanTCS} that the order type of a well-ordered language generated by a prefix grammar (i.e. in which
each nonterminal generates a prefix-free language)
is computable,
thus the isomorphism problem of context-free ordinals is decidable if the ordinals in question are given as the lexicograpic
ordering of \emph{prefix} grammars.
Also, the isomorphism problem of regular orderings is decidable as well~\cite{DBLP:journals/ita/Thomas86,BLOOM200555}.
At the other hand, it is undecidable for a context-free grammar whether it generates a dense language,
hence the isomorphism problem of context-free orderings in general is undecidable~\cite{ESIK2011107}.

Algorithms that work for the well-ordered case can in many cases be ``tweaked'' somehow
to make them work for the scattered case as well:
e.g. it is decidable whether $(L,<_\ell)$ is well-ordered or scattered~\cite{10.1007/978-3-642-22321-1_19}
and the two algorithms are quite similar.
In an earlier paper~\cite{GelleIvanGandalf} we showed that it is undecidable for a scattered context-free ordering
of rank $2$
whether its order type is $\omega+(\omega+\zeta)\times\omega$, even if it is given by a prefix grammar -- so
the complexity of the isomorphism problem
is quite different when one makes the step from well-ordered languages to scattered ones.

In the current paper we complement this result by showing that if the rank of a scattered context-free
ordering is at most one (we also show that this property is decidable as well), then its order type is
effectively computable (as a finite sum of the order types $1$, $\omega$ and $-\omega$).
	
\section{Notation}
A \emph{linear ordering} is a pair $(Q,<)$, where $Q$ is some set and the $<$ is a transitive, antisymmetric and connex 
(that is, for each $x,y\in Q$ exactly one of $x<y$, $y<x$ or $x=y$ holds) binary relation on $Q$.
The pair $(Q,<)$ is also written simply $Q$ if the ordering is clear from the context.
A (necessarily injective) function $h: Q_1\to Q_2$, where $(Q_1,<_1)$ and $(Q_2,<_2)$ are some linear orderings,
is called an \emph{(order) embedding}
if for each $x,y\in Q_1$, $x<_1 y$ implies $h(x) <_2 h(y)$. If $Q_1$ can be embedded into $Q_2$, then
this is denoted by $Q_1\preceq Q_2$.
If $h$ is also surjective, $h$ is an \emph{isomorphism}, in which case the two orderings are \emph{isomorphic}.
An isomorphism class is called an \emph{order type}. The order type of the linear ordering $Q$ is denoted by $o(Q)$.

For example, the class of all linear orderings contain all the finite linear orderings
and the orderings of the integers ($\mathbb{Z}$), the positive integers ($\mathbb{N}$) and the negative integers ($\mathbb{N}_{-}$) whose order type is denoted $\zeta$, $\omega$ and $-\omega$ respectively.
Order types of the finite sets are denoted by their cardinality, and $[n]$ denotes $\{1,\ldots,n\}$ for each $n\geq 0$, ordered in the standard way.

The ordered sum $\sum_{x\in Q} Q_x$, where $Q$ is some linear ordering and for each $x \in Q$, $Q_x$ is a linear ordering,
is defined as the ordering with domain $\{(x,q):x\in Q,q\in Q_x\}$ and ordering relation
$(x,q)<(y,p)$ if and only if either $x<y$, or $x=y$ and $q<p$ in the respective $Q_x$. 
If each $Q_x$ has the same order type $o_1$ and $Q$ has order type $o_2$, then the above sum has order type $o_1\times o_2$.
If $Q=[2]$, then the sum is usally written as $Q_1+Q_2$.

If $(Q,<)$ is a linear ordering and $Q'\subseteq Q$, we also write $(Q',<)$ for the subordering of $(Q,<)$, that is, to ease notation we also use $<$ for the restriction of $<$ to $Q'$.

A linear ordering $(Q,<)$ is called \emph{dense} if it has at least two elements and for each $x,y\in Q$ where $x<y$ there exists a $z\in Q$ such that $x<z<y$.
A linear ordering is \emph{scattered} if no dense ordering can be embedded into it.
It is well-known that every scattered sum of scattered linear orderings is scattered, and any finite union of scattered linear orderings is scattered.
A linear ordering is called a \emph{well-ordering} if it has no subordering of type $-\omega$. Clearly, any well-ordering is scattered. Since isomorphism preserves well-orderedness
or scatteredness, we can call an order type well-ordered or scattered as well,
or say that an order type embeds into another. We also write $o_1\preceq o_2$ to denote $o_1$ embeds into $o_2$.
The well-ordered order types are called \emph{ordinals}.
For any set $\Omega$ of ordinals, $(\Omega,<)$ is well-ordered by the relation $o_1\prec o_2~\Leftrightarrow~\hbox{``}o_1\hbox{ can be embedded injectively into }o_2$ but not vice versa''.
Amongst ordinals, it is common to use the notation $o_1<o_2$ instead of $o_1\prec o_2$.
The principle of well-founded induction can be formulated as follows. Assume $P$ is a property of ordinals such that for any ordinal $o$, if $P$ holds for all ordinals
smaller than $o$, then $P$ holds for $o$. Then $P$ holds for all the ordinals.

For standard notions and useful facts about linear orderings see e.g.~\cite{rosenstein} or~\cite{jalex}.

Hausdorff classified the countable scattered linear orderings with respect to their rank.
We will use the definition of the Hausdorff rank from \cite{10.1007/978-3-642-29344-3_25},
which slightly differs from the original one (in which $H_0$ contains only the empty ordering and the singletons,
and the classes $H_\alpha$ are not required to be closed under finite sum, see e.g.~\cite{rosenstein}).
For each countable ordinal $\alpha$, we define the class $H_\alpha$ of countable linear orderings as follows.
$H_0$ consists of all finite linear orderings, and
when $\alpha> 0$ is a countable ordinal,
then $H_\alpha$ is the least class of linear orderings closed under finite ordered sum and isomorphism
which contains all linear orderings of the form $\sum_{i\in\mathbb{Z}} Q_i$,
where each $Q_i$ is in $H_{\beta_i}$ for some $\beta_i < \alpha$.

By Hausdorff's theorem, a countable linear order $Q$ is scattered if and only if it belongs to $H_\alpha$ for some countable ordinal $\alpha$.
The \emph{rank} $r(Q)$ of a countable scattered linear ordering is the least ordinal $\alpha$ with $Q \in H_\alpha$.

As an example, $\omega$, $\zeta$, $-\omega$ and $\omega+\zeta$ or any finite sum of the form $\mathop\sum\limits_{i\in[n]}o_i$ with $o_i\in\{\omega, -\omega,1\}$ for each $i\in[n]$
each have rank $1$ while $(\omega+\zeta)\times\omega$ has rank $2$.

Let $\Sigma$ be an alphabet (a finite nonempty set) and let $\Sigma^*$ ($\Sigma^+$, resp) stand for the set of all (all nonempty, resp)
finite words over $\Sigma$, $\varepsilon$ for the empty word, $|u|$ for the length of the word $u$, $u\cdot v$ or simply $uv$ for the concatenation of $u$ and $v$. A \emph{language} is an arbitrary subset $L$ of $\Sigma^*$. 
We assume that each alphabet is equipped by some (total) linear order.
Two (strict) partial orderings, the strict ordering $<_s$ and the prefix ordering $<_p$ are defined over $\Sigma^*$ as
follows:
\begin{itemize}
	\item $u <_s v$ if and only if $u = u_1au_2$ and $v = u_1bv_2$ for some words $u_1, u_2, v_2\in\Sigma^*$ and letters $a < b$,
	\item $u <_p v$ if and only if $v = uw$ for some nonempty word $w\in\Sigma^*$.
\end{itemize} 
The union of these partial orderings is the lexicographical ordering $<_\ell = <_s \cup <_p$.
We call the language $L$ well-ordered or scattered,
if $(L,<_\ell)$ has the appropriate property and we define the rank $r(L)$ of a scattered language $L$
as $r(L,<_\ell)$. The order type $o(L)$ of a language $L$ is the order type of $(L,<_\ell)$.
For example, if $a<b$, then $o\Bigl(\{a^kb:k\geq 0\}\Bigr)=-\omega$ and $o\Bigl(\{(bb)^ka:k\geq 0\}\Bigr)=\omega$.

When $\varrho$ is a relation over words (like $<_\ell$ or $<_s$), we write $K\varrho L$ if $u\varrho v$ for each word $u\in K$ and $v\in L$.

An \emph{$\omega$-word} over $\Sigma$ is an $\omega$-sequence $a_1a_2\ldots$ of letters $a_i\in\Sigma$. The set of all $\omega$-words over $\Sigma$ is denoted $\Sigma^\omega$.
The orderings $<_\ell$ and $<_p$ are extended to $\omega$-words. An $\omega$-word $w$ is called \emph{regular} if $w=uv^\omega=uvvvv\ldots$ for some finite words
$u\in\Sigma^*$ and $v\in\Sigma^+$. When $w$ is a (finite or $\omega$-) word over $\Sigma$ and $L\subseteq\Sigma^*$ is a language, then $L_{<w}$
stands for the language $\{u\in L:u<w\}$. Notions like $L_{\geq w}$, $L_{<_sw}$ are also used as well, with the analogous semantics.

A \emph{context-free grammar} is a tuple $G=(N,\Sigma,P,S)$, where $N$ is the alphabet of the \emph{nonterminal symbols}, $\Sigma$ is the alphabet of \emph{terminal symbols} (or \emph{letters}) which is disjoint from $N$, $S\in N$ is the \emph{start symbol} and $P$ is a finite set of \emph{productions} of the form $A\to\alpha$, where $A\in N$ and $\alpha$ is a \emph{sentential form}, that is, $\alpha = X_1X_2\ldots X_k$ for some $k\geq 0$ and $X_1,\ldots,X_k\in N\cup \Sigma$. The derivation relations $\Rightarrow$, $\Rightarrow_\ell$, $\Rightarrow^*$ and $\Rightarrow_\ell^*$ are defined as usual
(where the subscript $\ell$ stands for ``leftmost'').
The \emph{language generated by} a grammar $G$ is defined as $L(G) = \{u\in\Sigma^* ~|~ S\Rightarrow^*u\}$.
Languages generated by some context-free grammar are called \emph{context-free languages}. 
For any set $\Delta$ of sentential forms, the language generated by $\Delta$ is $L(\Delta) = \{u \in \Sigma^* ~|~ \alpha\Rightarrow^* u\hbox{ for some } \alpha\in\Delta\}$. 
As a shorthand, we define $o(\Delta)$ as $o(L(\Delta))$.
When $X,Y\in N\cup\Sigma$ are symbols of a grammar $G$, we write $Y\preceq X$ if $X\Rightarrow^*uYv$ for some words $u$ and $v$; $X\approx Y$ if $X\preceq Y$ and $Y\preceq X$ both hold;
and $Y\prec X$ if $Y\preceq X$ but not $X\preceq Y$. A production of the form $X\to X_1\ldots X_n$ with $X_i\prec X$
for each $i\in[n]$ is called an \emph{escaping production}.

A \emph{regular language} over $\Sigma$ is one which can be built up from the singleton languages $\{a\}$, $a\in\Sigma$
and the empty language $\emptyset$ with finitely many applications of taking (finite) union,
concatenation $KL=\{uv:u\in K,v\in L\}$ and iteration $K^*=\{u_1\ldots u_n:n\geq 0,u_i\in K\}$. 
For standard notions on regular and context-free languages the reader is referred to any standard textbook, such as~\cite{Hopcroft+Ullman/79/Introduction}.

Linear orderings which are isomorphic to the lexicographic ordering of some context-free (regular, resp.) language
are called \emph{context-free (regular, resp.) orderings}.	
\section{Limits of languages}
In this section we introduce the notion of a limit of a language and establish a connection:
the main contribution of this concept is that one can decide whether a context-free language has a finite
number of limits and if so, one can effectively compute the limits themselves (Lemma~\ref{lem-limx-finite}),
and that a language has a finite number of limits if and only if its order type is scattered of rank at most one
(Theorem~\ref{thm-finite-limits-imply-computability}).

Firstly, we recall (and prove for the sake of completeness) that $\Sigma^\infty$ forms a complete lattice
with the partial ordering $\leq_\ell$ (which can be turned into a metric space as well).
\begin{lemma}
	\label{lem-lattice}
	$(\Sigma^\infty, \leq_\ell)$ is a complete lattice.
\end{lemma}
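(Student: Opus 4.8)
The plan is to exploit that $\leq_\ell$ is a linear order on $\Sigma^\infty=\Sigma^*\cup\Sigma^\omega$ and to reduce completeness to the existence of all infima, which I then construct explicitly. I would invoke the standard order-theoretic fact that a partially ordered set in which every subset has a greatest lower bound is automatically a complete lattice: for any $X$, the supremum of $X$ is the infimum of the set of its upper bounds (nonempty, see below). Hence it suffices to show that every subset $X\subseteq\Sigma^\infty$ admits an infimum.

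First I would dispose of the degenerate cases. The empty word $\varepsilon$ is the least element, being a prefix of every word, and if $c=\max\Sigma$ then $c^\omega$ is the greatest element: any $w\neq c^\omega$ is either a finite all-$c$ word, hence a prefix of $c^\omega$, or first deviates from $c^\omega$ with a letter $<c$, giving $w<_s c^\omega$. In particular every $X$ has an upper bound, so the sets of upper bounds used above are nonempty, and $\inf\emptyset=c^\omega$ exists.

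The core of the argument is the construction of $\inf X$ for nonempty $X$. I would build a candidate $w\in\Sigma^\infty$ together with a shrinking chain $X=X_0\supseteq X_1\supseteq\cdots$ of nonempty sets, letter by letter: having fixed a common prefix $p_i$ of length $i$ of all elements of $X_i$, if $p_i\in X_i$ then $p_i=\min X_i$ and the construction halts with $w=p_i$; otherwise every element of $X_i$ properly extends $p_i$, I let $a^*$ be the smallest letter occurring at position $i+1$ among them, set $p_{i+1}=p_i a^*$, and keep $X_{i+1}=\{x\in X_i : p_{i+1}\leq_p x\}$. If the process never halts, $w\in\Sigma^\omega$ is the unique infinite word having every $p_i$ as a prefix. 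The key invariant, proved by induction, is that $X_{i+1}$ and $X_i$ have exactly the same lower bounds: every element kept in $X_{i+1}$ is $<_\ell$ every discarded element of $X_i\setminus X_{i+1}$ (they share $p_i$ and then $a^*$ beats the larger continuation letter), so no lower bound can distinguish the two sets.

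The hard part will be verifying that $w$ is genuinely the \emph{greatest} lower bound, and doing so uniformly across the finite- and infinite-word outcomes. For the lower-bound property I would argue that each $x\in X$ either survives in every $X_i$, forcing $x=w$, or is discarded at some stage $i$, where $w$ and $x$ agree up to $p_i$ and then $w$ carries the smaller letter $a^*$, giving $w\leq_\ell x$. For maximality, given any lower bound $v$ with $w<_\ell v$, linearity forces $w<_s v$ (an infinite $w$ cannot be a proper prefix), so $w$ and $v$ first differ at some position $i+1$ where $w$ has the letter $a^*$ chosen at stage $i$; a witness $x\in X_i$ realizing that letter then satisfies $x<_\ell v$, contradicting that $v$ is a lower bound. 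The finite case is the easier instance of the same reasoning via $w=\min X_i$ together with the lower-bound invariant. This establishes infima for all subsets, and the general fact quoted at the outset upgrades it to a complete lattice.
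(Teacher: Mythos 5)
Your proof is correct and is essentially the dual of the paper's: where the paper greedily builds the supremum of an infinite language letter by letter (always taking the largest letter whose residual language is nonempty), you greedily build infima (smallest admissible letters) and then invoke the standard fact that a poset with all infima is a complete lattice --- a reduction the paper leaves implicit. Your write-up is in fact more thorough than the paper's sketch, since you verify the extremal-bound property explicitly and cover arbitrary subsets of $\Sigma^\infty$ (including infinite words, finite sets, and the empty set), but the underlying construction is the same.
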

\begin{proof}
	Let $L$ be an infinite language. If $L$ has a maximal element, then it is the supremum, otherwise we generate the word $a_1a_2a_3\ldots\in\Sigma^\omega$ in the following way: let $u_0 = \varepsilon$ and $u_i = a_1\ldots a_i$.
	We choose the largest possible letter $a_{i+1}$ with $(u_ia_{i+1})^{-1}L$ being nonempty.
	The word generated by this way is the supremum of $L$. 
\end{proof}

\subsection{Limits in general}
Though limits of a language could be defined as limits of Cauchy sequences in the aforementioned metric space,
the following (equivalent) definition is more convenient for our purposes.
\begin{definition}
	The word $w\in\Sigma^\omega$ is a \emph{limit} of an infinite language $L$, if for each $w_0<_p w$ there exists a word $u\in L$ such that $w_0<_p u$.
\end{definition}

If $L \subseteq \Sigma^*$ is a language, then we denote the limits of $L$ with $\blim{L}$. 

\begin{lemma}
	If $w$ is a limit of an infinite language $L$, then for each $w_0<_p w$ there exist infinitely many words $u\in L$ with $w_0<_p u$.
\end{lemma}
\begin{proof}
	We construct two sequences $w_0 <_p w_1 <_p \ldots <_p w$ and $u_0,\ u_1, \ldots \in L$ such that $w_i <_p u_i$ for each $i$ with mutual induction.
	Now $w_0$ is given. By definition for each $i$ there exists an $u_i\in L$ such that $w_i<_p u_i$ and $w_i\in \mathbf{Pref}(w)$ is constructed such that $|u_{i-1}|+1<|w_i|$.
	
	It is clear the words $u_i$ are pairwise different, since each $w_i$ has different length. So we get that $w_0$ is a prefix of each $u_i$, so $w_0$ is a prefix of infinitely many words in $L$.	
\end{proof}

\begin{lemma}	
\label{lem-infinite}
	For each infinite language $L$, the set $\blim{L}$ is nonempty.
\end{lemma}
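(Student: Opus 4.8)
The plan is to build a limit word $w=a_1a_2a_3\ldots\in\Sigma^\omega$ one letter at a time, maintaining throughout the construction the invariant that the finite prefix $a_1\ldots a_i$ produced after $i$ steps is a proper prefix (with respect to $<_p$) of infinitely many words of $L$. This is essentially an instance of König's lemma: the crucial hypothesis is that $\Sigma$ is finite, which supplies the finite branching needed to keep the construction alive. I deliberately phrase the invariant in terms of the \emph{proper} prefix relation $<_p$, so that the conclusion will match the definition of a limit verbatim.

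For the base case, observe that $\varepsilon$ is a proper prefix of every nonempty word, and since $L$ is infinite it contains infinitely many nonempty words; hence the empty prefix satisfies the invariant. For the inductive step, I would assume $p=a_1\ldots a_i$ is a proper prefix of infinitely many words of $L$. Each such word has the form $pax$ for some letter $a\in\Sigma$ and some word $x$, and because $\Sigma$ is finite the pigeonhole principle yields a letter $a_{i+1}$ for which infinitely many of these words begin with $pa_{i+1}$. Since at most one of them can equal $pa_{i+1}$ itself, the word $pa_{i+1}$ remains a proper prefix of infinitely many words of $L$, so I may extend the construction by this letter and preserve the invariant.

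Since the invariant can always be maintained, the procedure never halts and produces a genuine $\omega$-word $w=a_1a_2\ldots$. To verify that $w$ is a limit of $L$, I would take an arbitrary $w_0<_p w$, note that $w_0=a_1\ldots a_i$ for some $i$, and appeal to the invariant to obtain a (in fact, infinitely many) $u\in L$ with $w_0<_p u$; this is exactly the defining property, so $w\in\blim{L}$ and the set is nonempty. The only points requiring care are bookkeeping ones rather than genuine obstacles: getting the invariant stated with $<_p$ instead of mere prefixhood (which is why the ``at most one word equals $pa_{i+1}$'' remark is needed), and checking that the construction cannot stall, both of which reduce to the finiteness of $\Sigma$ together with the fact that infinitely many words each strictly longer than $p$ must share a common next letter.
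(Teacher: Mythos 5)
Your proof is correct and follows essentially the same approach as the paper: both construct the limit word letter by letter, choosing each next letter so that the resulting prefix remains a (proper) prefix of infinitely many words of $L$, which is possible by finiteness of $\Sigma$. You merely spell out the pigeonhole and proper-prefix bookkeeping that the paper's terse proof leaves implicit.
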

\begin{proof}
	We construct a limit word $w=a_1a_2\ldots\in\Sigma^\omega$. Let $u_0 = \varepsilon$ and $u_i = a_1\ldots a_i$ and we choose $a_{i+1}\in\Sigma$ such that $u_ia_{i+1}$ is a prefix of infinitely many words in $L$. Since $L$ is infinite there exists such a letter. Thus we can construct an infinite word which is a limit of $L$.    
\end{proof}

Now we justify using the name ``limit'': any supremum or infimum of a chain (which is a Cauchy sequence) of words
of a language is a limit of the language.
\begin{lemma} 
	\label{lem-sup-is-limit}
	If $w_0, w_1, \ldots$ is a $<_\ell$ (or $>_\ell$ respectively) chain in $L$, then its supremum (infimum, resp.) is a limit of $L$.
\end{lemma}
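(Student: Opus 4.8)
The plan is to treat the supremum case in full and then obtain the infimum case by the dual argument. First I would fix notation: write the chain as $w_0 <_\ell w_1 <_\ell \cdots$, an infinite strictly increasing sequence in $L$, and put $C = \{w_n : n\geq 0\}\subseteq L$. Since the chain is strictly increasing and infinite, $C$ has no maximal element, so by (the construction in the proof of) Lemma~\ref{lem-lattice} its supremum $w = \sup C$ is an $\omega$-word $w = a_1a_2\cdots\in\Sigma^\omega$; recall that there $a_{i+1}$ is chosen as the largest letter for which $(u_ia_{i+1})^{-1}C$ is nonempty, where $u_i = a_1\cdots a_i$. The one genuinely load-bearing fact is precisely this: a $<_\ell$-chain with no maximum has an \emph{infinite} supremum, which is exactly what Lemma~\ref{lem-lattice} supplies. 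Without it the statement would not even make sense, since limits are by definition $\omega$-words.

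It then remains to verify the limit condition for $w$, namely that every proper prefix of $w$ is a proper prefix of some word of $L$. I would argue directly from the explicit form of the supremum. Any finite word $p$ with $p <_p w$ is of the form $u_k = a_1\cdots a_k$ for $k = |p|\geq 0$ (with $u_0 = \varepsilon$). Since $w$ is infinite, $u_{k+1} = u_ka_{k+1}$ is again a prefix of $w$, and by the choice of $a_{k+1}$ the quotient $(u_{k+1})^{-1}C$ is nonempty; hence $u_{k+1}$ is a prefix of some $w_j\in C$. As $p = u_k <_p u_{k+1}$ and $u_{k+1}$ is a (not necessarily proper) prefix of $w_j$, we obtain $p <_p w_j$ with $w_j\in C\subseteq L$. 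Since $p <_p w$ was arbitrary, $w$ is a limit of $L$.

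For the infimum case I would run the symmetric construction: when $C$ has no minimal element, its infimum is the $\omega$-word obtained by always choosing the \emph{smallest} letter $a_{i+1}$ with $(u_ia_{i+1})^{-1}C$ nonempty (the order-theoretic dual of Lemma~\ref{lem-lattice}, valid since $(\Sigma^\infty,\leq_\ell)$ is a complete lattice). A strictly $>_\ell$-decreasing infinite chain has no minimum, so its infimum is again an $\omega$-word, and the identical prefix-chasing argument shows that every proper prefix of the infimum is a proper prefix of some $w_j\in C\subseteq L$. I expect no real obstacle beyond the one already flagged: the only step where something could go wrong is the assertion that the supremum (infimum) is genuinely infinite, and this is handled entirely by Lemma~\ref{lem-lattice}; once the explicit form of the extremum is in hand, the prefix argument is routine.
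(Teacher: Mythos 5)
Your proposal is correct, but it takes a genuinely different route from the paper's. The paper never touches the explicit construction from Lemma~\ref{lem-lattice}: for the supremum $w$ it argues abstractly that a proper prefix $w'<_p w$ cannot be an upper bound of the chain (as $w$ is the \emph{least} upper bound), picks $w_i$ with $w'<_\ell w_i\leq_\ell w$, and rules out $w'<_s w_i$ because together with $w'<_p w$ this would force $w<_s w_i$; the infimum case then needs a separate, less symmetric trick: if every $w_i$ satisfied $w'<_s w_i$, then writing $w'=u_0az^t$ ($z$ the largest letter, $a<z$) the word $w''=u_0b$, with $b$ the successor of $a$, would be a lower bound of the chain strictly above $w$, a contradiction. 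You instead read the limit condition directly off the greedy construction of the extremum: every prefix $u_{k+1}$ of the constructed word has $(u_{k+1})^{-1}C\neq\emptyset$ by the very choice of $a_{k+1}$, which is exactly what a limit requires. This buys two things: the supremum and infimum halves become the same argument, and the fact that the extremum is an $\omega$-word --- which the paper's proof silently assumes (limits are by definition infinite words) --- is made explicit.

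The one thin spot is your justification of the infimum half. You appeal to ``the order-theoretic dual of Lemma~\ref{lem-lattice}, valid since $(\Sigma^\infty,\leq_\ell)$ is a complete lattice.'' Completeness gives the \emph{existence} of $\bigwedge C$, but not that it is computed by the smallest-letter greedy construction, nor that it is infinite; and $(\Sigma^\infty,\leq_\ell)$ is not self-dual (a proper prefix always lies \emph{below} its extensions), so Lemma~\ref{lem-lattice} cannot simply be dualized formally --- indeed, the smallest-letter construction genuinely fails on sets that have a minimum, in a way the largest-letter construction does not fail on sets having non-maximal elements. The dual claim you need is nevertheless true, and the verification is where the no-minimum hypothesis does real work: with $C_i=\{v\in C: u_i\leq_p v\}$, if some $u_i$ belonged to $C$ it would be $\min C$ (words of $C_i$ extend $u_i$, words of $C\setminus C_i$ branch off with a strictly larger letter by the minimality of the choices), contradicting that a strictly decreasing chain has no minimum; hence every word of $C_i$ properly extends $u_i$, every word of $C_i\setminus C_{i+1}$ lies strictly above every word of $C_{i+1}$, so $\bigwedge C_{i+1}=\bigwedge C_i$ and $C_{i+1}$ again has no minimum, and the construction runs forever and outputs $\bigwedge C\in\Sigma^\omega$. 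This check is the analogue, in your setting, of the paper's successor-letter argument; your write-up asserts it rather than proves it, so you should either include it or fall back on the paper's direct argument for the infimum case.
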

\begin{proof}
	Let $w$ be the supremum of the $<_\ell$ chain in $L$. We only have to show that for each $w'<_p w$ there exists a member $u$ of the chain with $w'<_p u$.
	First, as the chain is infinite, there are infinitely many words $w_i$ with $|w_i|>|w'|$ and the supremum of these words $w_i$ is $w$ as well.
	For these words we cannot have $w'<_pw_i$.
	As $w>w'$ is the supremum of the words $w_i$, we have $w_i<_\ell w$ for each of them, moreover, there exists some word $w_i$ with $w'<_\ell w_i<_\ell w$.
	Since we know that $w'<_pw$, it has to be the case that $w'<_pw_i<_\ell w$ and the claim is proved.
	
	For the other case, let $w$ be the infimum of the $>_\ell$ chain in $L$ and let $w'<_pw$ be a prefix of $w$.
	We again have to show that $w'<_pw_i$ for some $i$. Again, we can take the subchain consisting of those words $w_i$ with $|w_i|>|w'|$,
	this does not change their infimum. Now we have $w'<_pw<_sw_i$ for each of these words $w_i$. We claim that $w'<_pw_i$ for at least one
	index $i$. Assume to the contrary that $w'<_sw_i$ for each $i$ and let us write $w'=u_0az^t$ where $z$ is the largest letter of $\Sigma$,
	and $a<z$.
	(Since there exist words with $w'<_sw_i$, $w'$ cannot have the form $z^t$). Let $b$ be the successor letter of $a$ and consider
	the word $w''=u_0b$. Obviously, $w''$ is the least word with $w'<_sw''$, thus $w''\leq_\ell w_i$ for each $i$. Since $w$ is the infimum of
	these words $w_i$, we should have $w''\leq_\ell w$ but this contradicts to $w'<_pw$ and $w'<_sw''$ as these two imply $w<_sw''$.
\end{proof}

Now we recall from~\cite{GelleIvanGandalf} that for any context-free language, we can compute a supremum or infimum
of the language.

\begin{lemma}[\cite{GelleIvanGandalf}, Lemma 1]
	\label{lem-sequence}
	For each sentential form $\alpha$ with $L(\alpha)$ being infinite, we can generate a sequence $w_0,w_1,\ldots \in L(\alpha)$ and a regular word $w\in\Sigma^\omega$ satisfying one of the following cases:
	\begin{itemize}
		\item[i)] $w_1<_sw_2<_s\ldots$ and $w=\mathop\bigvee\limits_{i\geq 0}w_i$
		\item[ii)] $w_1>_sw_2>_s\ldots$ and $w=\mathop\bigwedge\limits_{i\geq 0}w_i$
		\item[iii)] $w_1<_pw_2<_p\ldots$ and $w=\mathop\bigvee\limits_{i\geq 0}w_i$
	\end{itemize}
\end{lemma}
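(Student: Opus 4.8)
The plan is to force a pumping (self-embedding) structure out of the infiniteness of $L(\alpha)$, and then reduce the whole statement to comparing a single family of pumped words against one fixed periodic $\omega$-word.

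First I would produce the pumped family. Writing $\alpha = X_1\cdots X_k$, infiniteness of $L(\alpha)$ means some $X_i$ is a nonterminal $A$ with $L(A)$ infinite, so by the standard pumping argument for context-free languages there are sentential forms with $\alpha \Rightarrow^* \pi A \sigma$, $A \Rightarrow^* \mu A\nu$ and $A \Rightarrow^* \tau$, where the pump $\mu\nu$ can be chosen to derive a nonempty terminal word. Fixing terminal derivations of $\pi,\mu,\tau,\nu,\sigma$ produces terminal words $u_0, x, z, y, v_0$ with $|xy|\ge 1$ and $w_n := u_0 x^n z y^n v_0 \in L(\alpha)$ for every $n \ge 0$. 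All of this data is computable from the grammar.

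Next I would fix a single \emph{regular} candidate limit $W$: set $W := u_0 x^\omega$ if $x\neq\varepsilon$, and $W := u_0 z y^\omega$ otherwise (so that $y\neq\varepsilon$). In both cases $w_n$ and $W$ share a common prefix ($u_0 x^n$, respectively $u_0 z y^n$) whose length tends to infinity, so each $w_n$ satisfies exactly one of $w_n <_p W$, $w_n <_s W$ or $w_n >_s W$, and in the two latter cases the position $d_n$ of the first disagreement with $W$ tends to infinity. Since the lemma only asks for \emph{some} sequence, I am free to pass to a subsequence, and by pigeonhole one of these three classes is infinite. If the $<_p$-class is infinite, its members are all prefixes of the one word $W$, hence pairwise $<_p$-comparable, and a strictly $<_p$-increasing subsequence has supremum $W$, giving case (iii). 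If the $<_s$-class is infinite, I would use the elementary observation that, for two words both $<_s W$, the one disagreeing with $W$ later is the $<_s$-larger; as the $d_n$ are unbounded I can extract a subsequence along which $d_n$ strictly increases, which is then strictly $<_s$-increasing with supremum $W$, i.e.\ case (i). The $>_s$-class is symmetric and yields a strictly $>_s$-decreasing sequence with infimum $W$, i.e.\ case (ii).

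The conceptual heart is this reduction to a single fixed $W$, which makes the monotone extraction almost immediate. The point I expect to require the most care is effectivity of the classification: a word $w_n$ may align with $W$ for a long stretch (when $z$, or the successive $y$-blocks, keep matching the period of $x^\omega$), so the class of $w_n$ is not read off from any bounded prefix. I would handle this by showing that the comparison of $w_n$ with $W$ is ultimately periodic in $n$, which follows from the standard combinatorics of periodic words; this simultaneously guarantees that one class is infinite and lets the extraction above be performed effectively.
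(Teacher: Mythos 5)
Your proposal cannot be checked against an in-paper argument, because the paper does not prove this lemma at all: it is imported verbatim from \cite{GelleIvanGandalf} (Lemma 1 there), and everything downstream only uses the statement. So I evaluate it on its own merits: it is correct, and it delivers exactly what the later applications need (an effectively enumerable monotone sequence together with a \emph{regular} word $w$). The pumping step giving $w_n=u_0x^nzy^nv_0\in L(\alpha)$ with $|xy|\geq 1$ is standard and computable; the trichotomy of each $w_n$ against the fixed candidate $W=u_0x^\omega$ (resp.\ $u_0zy^\omega$ when $x=\varepsilon$) is well defined since $w_n$ is finite and $W$ infinite; your comparison observation (among words $<_s W$, the one disagreeing with $W$ later is $<_s$-larger, and dually for $>_s$) is right; and the identification of $W$ as the supremum (cases (i), (iii)) or infimum (case (ii)) of the extracted subsequence checks out. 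The step you flag as delicate is indeed the only real crux, since to \emph{emit} the first element of the sequence the algorithm must already know which of the three classes is infinite, and it does hold as you claim: when $x=\varepsilon$ the class of $w_n$ is the same for all $n$ outright (after stripping the common prefix $u_0zy^n$ one compares the fixed word $v_0$ against $y^\omega$); when $x\neq\varepsilon$ the comparison reduces to $zy^nv_0$ versus $x^\omega$, and by Fine--Wilf, if $zy^k<_p x^\omega$ holds for some $k$ with $k|y|\geq |x|+|y|$ then it holds for all $k$, so $k^*=\sup\{k:zy^k<_p x^\omega\}$ is computable; if $k^*$ is finite the class of $w_n$ is constant for $n>k^*$, and if $k^*=\infty$ the outcome depends only on the rotation of $x$ by $(|z|+n|y|)\bmod |x|$, hence is periodic in $n$ with period dividing $|x|/\gcd(|x|,|y|)$. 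In a full write-up you should spell out this Fine--Wilf argument explicitly, since the effectiveness of the whole construction (not the correctness of any single comparison) rests on it; with that added, your sketch is a complete and self-contained proof of the cited lemma.
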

Hence, Lemma~\ref{lem-sup-is-limit} in conjunction with Lemma~\ref{lem-sequence}
ensure that whenever $L$ is an infinite context-free language,
then one of its limits can be effectively computed, and this particular limit will be a regular word.

Next, we show how to compute limits of unions and products:

\begin{lemma}
	\label{lem-union}
	For any languages $K$ and $L$ $\blim{K\cup L} = \blim{K}\cup\blim{L}$.
\end{lemma}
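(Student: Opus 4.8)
The plan is to prove the two inclusions separately, with the reverse inclusion carrying essentially all of the content. For the inclusion $\blim{K}\cup\blim{L}\subseteq\blim{K\cup L}$ I would simply appeal to monotonicity of the limit operator: if $w\in\blim{K}$, then for every $w_0<_p w$ there is some $u\in K$ with $w_0<_p u$, and since $u\in K\subseteq K\cup L$ this same $u$ witnesses $w\in\blim{K\cup L}$; the case $w\in\blim{L}$ is symmetric. So the only real work lies in the inclusion $\blim{K\cup L}\subseteq\blim{K}\cup\blim{L}$.

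For that direction I would fix $w\in\blim{K\cup L}$ and argue by contradiction, assuming $w\notin\blim{K}$ and $w\notin\blim{L}$. Unfolding the definition, the failure $w\notin\blim{K}$ yields a prefix $p<_p w$ that is a proper prefix of no word of $K$, and likewise $w\notin\blim{L}$ yields a prefix $q<_p w$ that is a proper prefix of no word of $L$.

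The key step is to merge these two witnesses into a single bad prefix. Since $p$ and $q$ are both finite prefixes of the same $\omega$-word $w$, they are comparable under the prefix relation; let $r$ be the longer of the two, so that each of $p$ and $q$ is a prefix of $r$ and $r<_p w$ still holds. The crucial observation is that the property of being a proper prefix of no word of a language propagates upward along prefixes: if some word $u$ satisfied $r<_p u$, then, as $p$ is a prefix of $r$, we would obtain $p<_p u$, contradicting the choice of $p$. Hence $r$ is a proper prefix of no word of $K$, and by the same argument applied to $q$ no word of $L$ has $r$ as a proper prefix, so no word of $K\cup L$ does either. But $r<_p w$ together with $w\in\blim{K\cup L}$ forces the existence of some $u\in K\cup L$ with $r<_p u$, a contradiction. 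This establishes $w\in\blim{K}\cup\blim{L}$ and hence the claimed equality.

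The main obstacle, such as it is, is precisely this comparability-and-merging argument that lets the two separate failure witnesses be combined into one; everything else is a direct unfolding of the definition of a limit. In particular no finiteness/infiniteness case split is needed, since the argument treats $\blim{\cdot}$ purely as the set of $\omega$-words all of whose finite prefixes are proper prefixes of language members, and degenerate cases (where one of $K$, $L$ is finite and therefore contributes no limits) are absorbed automatically.
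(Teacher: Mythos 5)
Your proof is correct, and for the nontrivial inclusion it takes a genuinely different (though closely related) route from the paper's. The easy inclusion $\blim{K}\cup\blim{L}\subseteq\blim{K\cup L}$ is handled identically in both. For the reverse inclusion $\blim{K\cup L}\subseteq\blim{K}\cup\blim{L}$, the paper argues directly: every prefix $w_0<_p w$ has a witness in $K\cup L$, so by pigeonhole one of the two languages supplies witnesses for infinitely many prefixes of $w$, and that language then has $w$ as a limit --- a step which leaves implicit the monotonicity fact that a witness for a longer prefix is also a witness for every shorter one. You instead prove the contrapositive: assuming $w\notin\blim{K}$ and $w\notin\blim{L}$, you take the two failure prefixes $p$ and $q$, observe that they are comparable under $<_p$ (both being prefixes of the same $\omega$-word $w$), and merge them into the longer one $r$, which then admits no witness in $K$ nor in $L$, hence none in $K\cup L$, contradicting $w\in\blim{K\cup L}$. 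Both arguments ultimately rest on the same prefix-monotonicity observation (witnesses propagate downward along prefixes, failures propagate upward), but your contrapositive formulation makes this explicit, dispenses with the pigeonhole on infinitely many prefixes, and, as you note, absorbs the degenerate case of finite languages without any case split. Either argument is a complete proof of the lemma.
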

\begin{proof}
	Assume $w$ is a limit of $L$. Then for each $w_0<_pw$, there exists some $u\in L$ with $w_0<_p u$.
	Since then $u\in K\cup L$ as well, $w$ is a limit of $K\cup L$ as well.
	
	For the other direction, assume $w$ is a limit of $K\cup L$. Then for each $w_0<_pw$, there exists
	some $u\in K\cup L$ with $w_0<_pu$. Thus, either there exists infinitely many prefixes $w_0$ of $w$
	for which there exists some $u\in K$ with $w_0<_pu$ or there exists infinitely many prefixes $w_0$ of 
	$w$ for which there exists some $u\in L$ with $w_0<_pu$. In the former case, $w$ is a limit of $K$,
	in the latter, $w$ is a limit of $L$.
\end{proof}

\begin{lemma}
	\label{lem-limits-of-product}
	$\blim {KL} = \blim{K} \cup K\blim{L}$ if $K, L\neq \emptyset$.
\end{lemma}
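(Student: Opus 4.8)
The plan is to prove the two inclusions separately. The inclusion $\blim{K}\cup K\blim{L}\subseteq\blim{KL}$ is routine, while the reverse inclusion $\blim{KL}\subseteq\blim{K}\cup K\blim{L}$ carries the real content. For the easy direction I would first show $\blim{K}\subseteq\blim{KL}$: given $w\in\blim{K}$ and a finite prefix $w_0<_pw$, choose $u\in K$ with $w_0<_pu$ and any $v\in L$ (possible since $L\neq\emptyset$); then $uv\in KL$ and $w_0<_pu\leq_puv$, so $w_0<_puv$ and $w\in\blim{KL}$. Next I would show $K\blim{L}\subseteq\blim{KL}$: write a typical element as $w=uw'$ with $u\in K$ and $w'\in\blim{L}$, and for a prefix $w_0<_pw$ split on whether $|w_0|\leq|u|$ or not. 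If $|w_0|\leq|u|$, take any nonempty $v\in L$ (one exists since $w'\in\blim{L}$ forces $L$ to be infinite) and observe $w_0\leq_pu<_puv\in KL$; if $|w_0|>|u|$, then $w_0=uw_0'$ with $w_0'<_pw'$, pick $v\in L$ with $w_0'<_pv$, and get $w_0=uw_0'<_puv\in KL$. In both cases $w\in\blim{KL}$.

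For the hard direction, fix $w\in\blim{KL}$ and write $p_n$ for the length-$n$ prefix of $w$; by definition there is $u_nv_n\in KL$ with $u_n\in K$, $v_n\in L$ and $p_n<_pu_nv_n$. The idea is to measure how far words of $K$ follow $w$. Concretely, I would consider the set $P=\{m\geq0:\exists u\in K\text{ with }p_m\leq_pu\}$ of lengths up to which some word of $K$ agrees with $w$. This set contains $0$ (as $K\neq\emptyset$) and is downward closed, hence is either all of $\mathbb{N}$ or a finite initial segment. If $P=\mathbb{N}$, then for every prefix $w_0=p_m<_pw$ some $u\in K$ satisfies $p_{m+1}\leq_pu$, whence $w_0<_pu$, so $w\in\blim{K}$ and we are done.

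The remaining case, $P=\{0,\dots,M\}$ for a finite $M$, is where I expect the main difficulty. Here no word of $K$ begins with $p_{M+1}$, so for every $n>M$ the witness $u_n$ must satisfy $|u_n|<n$: otherwise $p_n\leq_pu_n$ would force $p_{M+1}\leq_pu_n$ and put $M+1\in P$. Consequently $u_n<_pp_n<_pw$ is a prefix of $w$, and since $u_n\in K$ this prefix has length at most $M$. Thus all the $u_n$ ($n>M$) lie in the finite set of prefixes of $w$ of length $\leq M$ belonging to $K$, and by pigeonhole some fixed $u^\ast=p_{j^\ast}\in K$ with $j^\ast\leq M$ equals $u_n$ for infinitely many $n$. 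I would then factor $w=u^\ast w'$ and verify $w'\in\blim{L}$: given a prefix $w_0'<_pw'$, set $m=j^\ast+|w_0'|$ so that $p_m=u^\ast w_0'$, choose an index $n\geq m$ with $u_n=u^\ast$, and cancel the common prefix $u^\ast$ in $p_n<_pu^\ast v_n$ to obtain $p_n=u^\ast p_n'$ with $p_n'<_pv_n$; since $p_m\leq_pp_n$ this yields $w_0'\leq_pp_n'<_pv_n\in L$. Hence $w'\in\blim{L}$ and $w=u^\ast w'\in K\blim{L}$.

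The crux is this finite case: the conceptual work is isolating the right invariant, namely the downward-closed set $P$ (equivalently, the supremum of the agreement length between words of $K$ and $w$), so that ``$K$ runs out of prefixes of $w$'' can be converted into a uniform bound $|u_n|\leq M$; after that, pigeonhole pins down a single $u^\ast\in K$ to strip off, and the rest is bookkeeping with the prefix relation. The two easy inclusions and the infinite-$P$ subcase are direct manipulations of the definition of limit.
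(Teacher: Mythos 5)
Your proposal is correct and takes essentially the same route as the paper: both easy inclusions are checked directly from the definition, and for the converse the paper likewise fixes witnesses $u_nv_n\in KL$ and makes a dichotomy on the $K$-components (bounded versus unbounded lengths $|u_n|$, for which your agreement-length set $P$ is a cosmetic variant), concluding $w\in\blim{K}$ in one branch and, in the other, using pigeonhole to fix a single $u\in K$ occurring infinitely often, so that $w=uw'$ with $w'\in\blim{L}$.
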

\begin{proof}
	 Let $u\in K$ be a word and $w$ be a limit of $L$. To prove that $uw$ is a limit of $KL$ we only have to show that for each prefix $w'$ of $uw$ there exist a word $w^*\in KL$ with $w' <_p w^*$. Let $w'\in \mathbf{Pref}(uw)$, and since it is enough to see the prefixes which are longer than $u$, the word $w'$ can be written as $uw_0$. Since $w$ is a limit of $L$ there exists a word $v\in L$ such that $w_0<_p v$. Thus there is a word $uv\in KL$ such that $w'<_puv$.
	 
	 Now let $w$ be a limit of $K$. To prove that $w$ is a limit of $KL$, let $u$ be a word in $L$ and $w_0$ be a prefix of $w$. Since $w$ is a limit of $K$ there exists a word $v\in K$ with $w_0 <_p v$ by definition. So $vu \in KL$ and $w_0$ is a prefix of $vu$ as well.
	 
	 For the other direction, we have to prove that there are no more limits of $KL$. Let $w$ be a limit of $KL$ and $w_i$ be the prefix of $w$ with length $i$. Since $w$ is a limit of $KL$ there exist a word $u_iv_i\in KL$ such that $w_i <_p u_iv_i$ where $u_i\in K$ and $v_i\in L$. Consider for each $i>0$ the lengths of these words $u_i$.
	 There are two cases: either there is a finite upper bound on $|u_i|$ or there is not.
	 
	 If $|u_i|$ is not upperbounded, then $w$ is a limit of $K$, since for each prefix $w_i$ there exists
	 some long enough $u_j$ with $w_i<_pu_j$.

	In the case where the lengths of these $u_i$ words is bounded, let $\ell = \max|u_i|$ be the maximal length.
	Since there are only finitely many words of length at most $\ell$, there has to be some $u=u_j$ such that
	$u=u_i$ for infinitely many indices $i$. Hence in particular, $w=uw'$ for some $w'\in\Sigma^\omega$.
	We show that $w'\in\blim L$, yielding $w\in K\blim L$. Indeed, if $w^*<_pw'$ is a prefix of $w'$, then
	$uw^*<_pw$ and thus there exists some $v_i$ with $uw^*<_puv_i$, that is, $w^*<_pv_i$ and so $w^*$ is a limit of $L$.
\end{proof}

\begin{corollary}
	\label{cor-la-al}
	For any language $L$ and letter $a\in\Sigma$,
	$\blim{L}=\blim{La}$ and $a\cdot\blim{L}=\blim{aL}$.
\end{corollary}

\begin{corollary}
	\label{cor-uLv}
	For any language $L\subseteq\Sigma^*$ and words $u,v\in\Sigma^*$, $\blim{uLv}=u\cdot \blim{L}$.
\end{corollary}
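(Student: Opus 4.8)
The plan is to reduce the corollary to the product lemma (Lemma~\ref{lem-limits-of-product}) by viewing the prefix $u$ and the suffix $v$ as singleton languages $\{u\}$ and $\{v\}$, so that $uLv$ becomes the three-fold product $\{u\}\,L\,\{v\}$. Throughout I would adopt the convention, already implicit in Corollary~\ref{cor-la-al}, that a finite language has no limits, so that $\blim K=\emptyset$ whenever $K$ is finite; in particular $\blim{\{u\}}=\blim{\{v\}}=\emptyset$.

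First I would dispose of the degenerate case $L=\emptyset$: then $uLv=\emptyset$, so $\blim{uLv}=\emptyset=u\cdot\emptyset=u\cdot\blim L$ and the identity holds trivially. Assume henceforth $L\neq\emptyset$.

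Next I would peel off the suffix. Since $\{u\}L\neq\emptyset$ and $\{v\}\neq\emptyset$, Lemma~\ref{lem-limits-of-product} applied to the factors $\{u\}L$ and $\{v\}$ gives $\blim{uLv}=\blim{(\{u\}L)\{v\}}=\blim{\{u\}L}\cup(\{u\}L)\,\blim{\{v\}}$. As $\{v\}$ is finite, $\blim{\{v\}}=\emptyset$, so the second term vanishes and $\blim{uLv}=\blim{uL}$. I would then peel off the prefix in the same way: applying Lemma~\ref{lem-limits-of-product} to the factors $\{u\}$ and $L$ yields $\blim{uL}=\blim{\{u\}}\cup\{u\}\,\blim L$, and since $\blim{\{u\}}=\emptyset$ this collapses to $u\cdot\blim L$. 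Chaining the two equalities gives $\blim{uLv}=u\cdot\blim L$, as required. (Equivalently, one may argue by induction on $|u|$ and $|v|$, appending and prepending one letter at a time via Corollary~\ref{cor-la-al}.)

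The only point requiring care is the bookkeeping of the nonemptiness hypotheses of Lemma~\ref{lem-limits-of-product} together with the convention $\blim K=\emptyset$ for finite $K$; once these are fixed, the argument is immediate, so I do not expect any genuine obstacle beyond this routine case split.
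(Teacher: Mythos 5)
Your proof is correct and takes essentially the same route as the paper: the paper states this as an immediate consequence of Lemma~\ref{lem-limits-of-product} (and Corollary~\ref{cor-la-al}) without writing out a proof, and your reduction via the singleton languages $\{u\}$ and $\{v\}$, together with the convention that finite languages have no limits (which the paper uses implicitly throughout, e.g.\ in Theorem~\ref{thm-finite-limits-imply-computability}), is exactly that intended derivation. The case split on $L=\emptyset$ and the bookkeeping of the nonemptiness hypotheses are handled properly, so there is no gap.
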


\subsection{Unique limits}
In this part we establish the decidability of the problem whether a context-free language has a unique limit.
(In this case, the limit itself is computable as well, thanks to Lemma~\ref{lem-sequence}.)

In the rest of the paper when grammars are involved, we assume the grammar $G=(N,\Sigma,P,S)$
contains no left recursive nonterminals,
and for each $X\in N$, $X$ is usable and $L(X)$ is an infinite language of nonempty words.
Moreover, each nonterminal but possibly $S$ is assumed to be recursive.
Any context-free grammar can effectively be transformed into such a form, see e.g.~\cite{GelleIvanTCS}.

It is also known~\cite{10.1007/978-3-642-22321-1_19} that if the context-free grammar $G$ generates
a scattered language, then for each recursive nonterminal $X$ there exists a unique (and computable)
primitive word $u_X$ such that whenever $X\Rightarrow^+uX\alpha$ for some $u\in\Sigma^*$ and sentential form
$\alpha$, then $u\in u_X^+$. Moreover, for each pair $X\approx Y$ of recursive nonterminals there exists
a (computable) word $u_{X,Y}\in\Sigma^*$ such that whenever $X\Rightarrow^+ uYv$, then $u\in u_{X,Y}u_Y^*$.

\begin{lemma}
	\label{lem-finite-strict-unique-limit}
	The word $w\in\Sigma^\omega$ is the unique limit of an infinite language $L$ if and only if for each $w_0<_p w$ there exists only finitely many words $u\in L$ such that $u<_s w_0$ or $w_0<_s u$.
\end{lemma}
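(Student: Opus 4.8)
The plan is to prove both implications, using throughout the elementary observation that the condition ``$u <_s w_0$ or $w_0 <_s u$'' holds exactly when $u$ and $w_0$ first disagree at some position $j\le |w_0|$, so that the words $u\in L$ \emph{failing} this condition are precisely those prefix-comparable with $w_0$. I will freely use the earlier observation that a limit of $L$ has infinitely many words of $L$ extending each of its prefixes, together with Lemma~\ref{lem-infinite}.

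For the direction assuming the finiteness condition, I would first check that $w$ is a limit: for any prefix $w_0 <_p w$, only finitely many words of $L$ disagree with $w_0$ (by hypothesis) and only finitely many are prefixes of $w_0$, so all but finitely many $u\in L$ satisfy $w_0 <_p u$; as $L$ is infinite such $u$ exist, and $w$ is a limit. For uniqueness, suppose $w'\ne w$ is another limit and let $j$ be the first position at which $w$ and $w'$ differ; write $p$ for the length-$j$ prefix of $w$ and $w'_0$ for the length-$j$ prefix of $w'$. The limit property of $w'$ yields infinitely many $u\in L$ with $w'_0 <_p u$; each such $u$ agrees with $p$ up to position $j-1$ but carries a letter distinct from that of $w$ at position $j$, hence is $<_s$-comparable with $p$. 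This contradicts the finiteness condition applied to $p <_p w$, so no second limit exists.

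For the converse I would argue by contraposition: assume some prefix $w_0 <_p w$ admits infinitely many words $u\in L$ with $u<_s w_0$ or $w_0<_s u$, and produce a second limit. Each such $u$ first disagrees with $w_0$ at some position $j\in\{1,\dots,|w_0|\}$ with some letter distinct from the corresponding letter of $w_0$; since these divergence data range over a finite set, the pigeonhole principle yields a single prefix $q$ — the common prefix of $w_0$ up to position $j-1$ followed by one fixed letter differing from $w_0$ at position $j$ — that is a prefix of infinitely many words of $L$ and that diverges from $w$ at position $j$. Greedily extending $q$ as in the proof of Lemma~\ref{lem-infinite} (at each step only a letter with infinitely many extensions in $L$ can be chosen, and during the first $|q|$ steps this forces the letters of $q$) produces a limit $w'$ of $L$ beginning with $q$; since $q$ disagrees with $w$, we have $w'\ne w$, contradicting that $w$ is the unique limit.

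The main obstacle is the pigeonhole step in the converse: one must treat both forms of $<_s$-comparability uniformly, read off the correct finite invariant (the divergence position together with the diverging letter), and then verify both that the extracted prefix $q$ is genuinely shared by infinitely many words and that the greedily constructed limit really does begin with $q$ while still being a limit of all of $L$ rather than merely of the sublanguage of words extending $q$.
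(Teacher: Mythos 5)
Your proposal is correct and its core argument for the nontrivial direction coincides with the paper's: extract, by pigeonhole, a single finite prefix $q$ that diverges from $w$ yet is a prefix of infinitely many words of $L$, then apply Lemma~\ref{lem-infinite} to that infinite set of words to obtain a limit of $L$ lying in $q\Sigma^\omega$, hence different from $w$. The only differences are cosmetic: the paper reaches $q$ by taking the shortest offending prefix and treating the two $<_s$-cases separately ("analogously"), whereas you pigeonhole over divergence position and letter uniformly, and you additionally write out the easy converse (the finiteness condition implies $w$ is the unique limit), which the paper's proof omits.
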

\begin{proof}
	We will see just the case where $w_0 <_s u$, the other one can be done analogously.
	
	Suppose for the sake of contradiction there exist infinitely many words $u\in L$ with $w_0 <_s u$. 
	Let $w_0$ be the shortest such word, it can be written as $w_0 = w_0'a$. Since there are just finitely many words $u\in L$ with $w_0'<_s u$, it has to be the case that $w_0'<_p u$ and $w_0'a<_s u$ for infinitely many $u\in L$. Then there exists a letter $b\in\Sigma$ such that $b>a$ and infinitely many words $u\in L$ such that $w_0'b <_p u$. But any limit of these words is in $w_0'b\Sigma^\omega$
	(and by Lemma~\ref{lem-infinite} at least one limit exists), which cannot be equal to $w$, so the language $L$ has two different limits which is a contradiction. 
\end{proof}

\begin{lemma}
	\label{lem-unique-limit-embeds-into-omega-plus-minusomega}
	If $L$ has a unique limit $w$, then $o(L_{<w}) \preceq \omega$ and $o(L_{>w}) \preceq -\omega$.
	Moreover, in this case both $o(L_{<w})$ and $o(L_{>w})$ are effectively computable
	(and hence so is $o(L)=o(L_{<w})+o(L_{>w})$).
\end{lemma}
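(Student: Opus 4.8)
The plan is to reduce both embedding claims to a single combinatorial fact about the unique limit and then read off computability from the fact that $w$ is a regular, effectively computable word. A linear order in which every element has only finitely many elements above it is either finite or isomorphic to $-\omega$, hence embeds into $-\omega$; dually, if every element has only finitely many elements below it, the order embeds into $\omega$. So it suffices to prove that every $u^\ast\in L_{>w}$ has only finitely many $\ell$-larger words in $L$, and symmetrically that every $u^\ast\in L_{<w}$ has only finitely many $\ell$-smaller words in $L$. The main tool will be the characterization in Lemma~\ref{lem-finite-strict-unique-limit}: since $w$ is the unique limit, for every prefix $w_0<_p w$ only finitely many $u\in L$ satisfy $w_0<_s u$ or $u<_s w_0$.

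For $L_{>w}$, I would fix $u^\ast\in L_{>w}$. As $u^\ast$ is finite and $w$ infinite, $u^\ast>_s w$; let $x$ be the longest common prefix of $u^\ast$ and $w$, so that $u^\ast=xas$ and $w=xbt$ with letters $a>b$, and put $w_0:=xb$, a proper prefix of $w$. The heart of the argument is the inclusion $\{v\in L: v>_\ell u^\ast\}\subseteq\{v\in L: w_0<_s v\}$. I would establish it by a short case analysis on the relation between an arbitrary $v$ and $w_0$: if $v$ fails to satisfy $w_0<_s v$ — i.e.\ $v<_s w_0$, or $v$ is $<_p$-comparable or equal to $w_0$ — then comparing the first $|x|+1$ letters of $v$ against $u^\ast=xas$ forces $v<_\ell u^\ast$ in every subcase (using $b<a$), contradicting $v>_\ell u^\ast$. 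Once the inclusion is in hand, Lemma~\ref{lem-finite-strict-unique-limit} gives that the right-hand side is finite, so $u^\ast$ has finitely many $\ell$-larger words in $L$; hence $o(L_{>w})\preceq -\omega$.

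The argument for $L_{<w}$ is dual but needs one extra observation, since the prefix relation is not symmetric between finite words and the infinite word $w$. Fix $u^\ast\in L_{<w}$. If $u^\ast<_s w$, I choose $w_0=xb$ at the first difference exactly as above (now with $a<b$) and argue $\{v\in L: v<_s u^\ast\}\subseteq\{v\in L: v<_s w_0\}$; if instead $u^\ast<_p w$, I simply take $w_0=u^\ast$. In either case the strictly $\ell$-smaller words split into those with $v<_s u^\ast$, which Lemma~\ref{lem-finite-strict-unique-limit} bounds through $w_0$, and those with $v<_p u^\ast$, of which there are at most $|u^\ast|$ many because $u^\ast$ has finitely many prefixes. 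Thus every element of $L_{<w}$ has finitely many $\ell$-smaller words in $L$, giving $o(L_{<w})\preceq\omega$.

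For effectiveness, recall that Lemma~\ref{lem-sequence} together with the uniqueness of the limit lets us compute $w$ as a regular word $w=pq^\omega$. The sets $\{u\in\Sigma^\ast: u<_\ell w\}$ and $\{u\in\Sigma^\ast: u>_\ell w\}$ are regular with effectively constructible automata (comparison against an eventually periodic $w$ is tracked by a finite automaton), so $L_{<w}$ and $L_{>w}$ are context-free and their grammars are computable. For each of them I would decide finiteness by the standard criterion for context-free languages: if the language is infinite its order type is the already-determined $\omega$ (resp.\ $-\omega$), and if it is finite its words can be enumerated and counted, yielding a finite order type. Finally, since $w\notin L$ (it is infinite) and every finite word is $\ell$-comparable to $w$, we have the disjoint union $L=L_{<w}\uplus L_{>w}$ with $L_{<w}<_\ell L_{>w}$, so $o(L)=o(L_{<w})+o(L_{>w})$ is computable. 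I expect the case analysis establishing the inclusion $\{v\in L: v>_\ell u^\ast\}\subseteq\{v\in L: w_0<_s v\}$ — and its dual together with the separate bookkeeping for prefixes — to be the only delicate step; the computability half is routine once $w$ is available as a regular word.
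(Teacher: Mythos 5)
Your proof is correct, but it reaches the two embedding claims by a genuinely different route than the paper. The paper argues via suprema and infima: if $L_{<w}$ is infinite, then every infinite subset $K\subseteq L_{<w}$ has a limit (Lemma~\ref{lem-infinite}), which by uniqueness must be $w$, so $\bigvee K=w$ (here Lemma~\ref{lem-sup-is-limit} is invoked); consequently no element of $L_{<w}$ can have infinitely many predecessors, since their set would be an infinite subset whose supremum lies below $w$, and the order type must be $\omega$ -- dually for $L_{>w}$. You instead work entirely from the characterization of unique limits in Lemma~\ref{lem-finite-strict-unique-limit}: your case analysis on the longest common prefix of $u^*$ with $w$ establishes the inclusion $\{v\in L: v>_\ell u^*\}\subseteq\{v\in L: w_0<_s v\}$ for a suitable proper prefix $w_0$ of $w$ (and the dual inclusion below, with separate bookkeeping for the at most $|u^*|$ prefixes of $u^*$), so each element of $L_{>w}$ has finitely many successors and each element of $L_{<w}$ finitely many predecessors; the standard fact about such orders then gives the embeddings. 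Both arguments are sound, and they trade off in an interesting way: the paper's is shorter because it leans on the lattice/limit machinery already developed, while yours is more elementary and self-contained in the combinatorial part. Your computability half is actually more explicit than the paper's: the paper tacitly assumes that finiteness of $L_{<w}$ and $L_{>w}$ can be decided and their sizes computed, whereas you justify this by computing $w$ as a regular word (Lemma~\ref{lem-sequence} plus uniqueness), observing that $\{u\in\Sigma^*: u<_\ell w\}$ and $\{u\in\Sigma^*: u>_\ell w\}$ are effectively regular, and invoking closure of context-free languages under intersection with regular sets together with decidability of their finiteness. The only gloss worth spelling out (present in both your proof and the paper's) is that in the infinite case the embedding upgrades to an isomorphism: an infinite order in which every element has finitely many predecessors has order type exactly $\omega$, because the predecessor-counting map is an order isomorphism onto $\mathbb{N}$.
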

\begin{proof}
	In the $o(L_{<w})$ case, if $L_{<w}$ is finite (and thus its size is computable),
	we are done. Otherwise $L_{<w}$ is infinite, which means it has a limit, and this limit has to be the $w$ (since it is unique for $L$). By Lemma~\ref{lem-sup-is-limit} the supremum
	$\bigvee L_{<w}$ of this language is $w$ as well, moreover,
	each infinite $K\subseteq L_{<w}$ has $\bigvee K = w$, so the order type of $L_{<w}$ has to be $\omega$.
	
	In the $o(L_{>w})$ case, if $L_{>w}$ is finite, it can be embedded into $-\omega$ so we are done. Otherwise $L_{>w}$ is infinite, so it has to have a limit which is $w$. Since this limit should be an infimum of a descending chain, each infinite $K\subseteq L_{>w}$ has $\bigwedge K = w$, so the order type of $L_{>w}$ has to be $-\omega$.
\end{proof}

Before proceeding to the case of concatenation, we recall the notion of prefix chains from~\cite{GelleIvanGandalf}.
For a word $w\in\Sigma^\omega$, let $\mathbf{Pref}(w)$ stand for the set $\{u\in\Sigma^*:u<_pw\}$ of
proper prefixes of $w$. A language $L\subseteq\Sigma^*$ is called a \emph{prefix chain} if $L\subseteq\mathbf{Pref}(w)$
for some $\omega$-word $\omega$. Lemma 2 from~\cite{GelleIvanGandalf} states that it is decidable for
any context-free language $L$ whether $L$ is a prefix chain and if so, a suitable $w\in\Sigma^\omega$ 
can be effectively computed.

\begin{lemma}
	\label{lem-x1x2}
	Assume we know that for the nonterminals $X_1$ and $X_2$
	whether the languages $L_1=L(X_1)$ and $L_2=L(X_2)$ have a unique limit.	
	Then it is computable whether $L=L(X_1X_2)$ has a unique limit.
\end{lemma}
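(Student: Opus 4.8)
The plan is to reduce everything to the product formula for limits, $\blim{L}=\blim{L_1}\cup L_1\blim{L_2}$ from Lemma~\ref{lem-limits-of-product} (which applies since $L_1$ and $L_2$ are nonempty, being infinite languages of nonempty words), and then to analyze exactly when this union is a singleton. I will use throughout that when an infinite context-free language has a unique limit, this limit is a \emph{regular} word that is effectively computable, by Lemma~\ref{lem-sequence} together with Lemma~\ref{lem-sup-is-limit} (the computed supremum or infimum is a limit, hence \emph{the} limit).

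First I would extract two necessary conditions. Since $\blim{L_1}\subseteq\blim{L}$ and $\blim{L_1}\neq\emptyset$ by Lemma~\ref{lem-infinite}, if $\blim{L}$ is a singleton then so is $\blim{L_1}$, so $L_1$ must have a unique limit. Likewise, fixing any $u\in L_1$, the set $u\,\blim{L_2}$ is contained in $L_1\blim{L_2}\subseteq\blim{L}$; as left-concatenation by a fixed word is injective, $u\,\blim{L_2}$ has the same cardinality as $\blim{L_2}$, so $\blim{L_2}$ must also be a singleton, i.e. $L_2$ has a unique limit. Both conditions are decidable by hypothesis, and if either fails I output that $L$ has no unique limit.

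It then remains to treat the case where $L_1,L_2$ both have unique (and computable, regular) limits $w_1,w_2$. Here $\blim{L}=\{w_1\}\cup\{uw_2:u\in L_1\}$, and this is a singleton if and only if $\{uw_2:u\in L_1\}\subseteq\{w_1\}$, which (as $L_1\neq\emptyset$) means $uw_2=w_1$ for every $u\in L_1$. Writing $R=\{u\in\Sigma^*:uw_2=w_1\}$, the task reduces to deciding whether $L_1\subseteq R$. (Note that this test automatically forces the common value to be $w_1$, so there is no separate ``it equals $w_1$'' condition to check.)

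The crux, and the main obstacle, is to show that $R$ is a computable regular language; once this is done, $L_1\subseteq R$ is decidable by the standard fact that a context-free language is contained in a regular one iff its intersection with the regular complement is empty. A word $u$ lies in $R$ exactly when $u$ is a finite prefix of $w_1$ and the residual $u^{-1}w_1$ (the word obtained from $w_1$ by deleting its prefix $u$) equals $w_2$. For the infinite language $L_1$ to be contained in $R$, the set $R$ must be infinite, which is possible only when $w_2$ is periodic and $L_1\subseteq\mathbf{Pref}(w_1)$ is a prefix chain aligned with that period; this structural collapse is the delicate point. Since $w_1$ is regular, its residuals $u^{-1}w_1$ take only finitely many values and become purely periodic past the initial segment, so the set of lengths $\{n:(w_1[1..n])^{-1}w_1=w_2\}$ is ultimately periodic and effectively computable (equality of regular words being decidable); hence $R$, the set of prefixes of $w_1$ of those lengths, is regular and computable. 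The prefix-chain machinery recalled before the statement (Lemma 2 of~\cite{GelleIvanGandalf}) gives an alternative route: first decide whether $L_1$ is a prefix chain with supremum $w_1$, then test the residual condition on its set of lengths.
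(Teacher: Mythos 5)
Your proposal is correct, and its skeleton matches the paper's: both reduce to the case where $L_1,L_2$ have unique (computable, regular) limits $w_1,w_2$, observe via Lemma~\ref{lem-limits-of-product} that the question becomes whether $uw_2=w_1$ for every $u\in L_1$, and decide this universal condition using decidability of emptiness for intersections of context-free and regular languages. The differences lie in the realization, and they make your proof leaner in two respects. First, the paper decides the condition by building the product automaton $M_{u_1,v_1}\times M_{u_2,v_2}$, computing which of its states are reachable under words of $L_1$, and testing reachability of ``disagreement'' states, after a separate preliminary test (Lemma 2 of \cite{GelleIvanGandalf}) that $L_1$ is a prefix chain; your single regular language $R=\{u:uw_2=w_1\}$ packages all of this into one containment test $L_1\subseteq R$, with the prefix-chain condition subsumed automatically since $R\subseteq\mathbf{Pref}(w_1)$. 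Second, the paper does not fully exploit that Lemma~\ref{lem-limits-of-product} is stated (and proved) as an \emph{equality}: in its final case it re-proves by hand, via Lemma~\ref{lem-finite-strict-unique-limit}, that $w_1$ is then the unique limit of $L$, whereas you read the exact set $\blim{L}=\{w_1\}\cup L_1w_2$ off the lemma, rendering that verification unnecessary. Your regularity argument for $R$ (finitely many residuals of the regular word $w_1$, an effectively ultimately periodic set of good lengths, decidability of equality of regular $\omega$-words) is sound, so the containment test is effective and the proof goes through.
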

\begin{proof}
	By assumption, the nonterminals $X_1$ and $X_2$ each generate an infinite language so they have at
	least one limit by Lemma~\ref{lem-infinite}.	
	By Lemma~\ref{lem-limits-of-product}, if either $L_1$ or $L_2$ has at least two limits, then so have
	$L$ and we can stop.
	
	Assume both $L_1$ and $L_2$ have a unique limit.
	As they are both context-free languages,
	their limits are computable regular words by Lemma~\ref{lem-sequence}. Let $u_1v_1^\omega$ and $u_2v_2^\omega$
	respectively be the limits of $L_1$ and $L_2$.
	If $L_1$ is not a prefix chain, that is, $x<_sy$ for some $x,y\in L_1$, then both $xu_2v_2^\omega$
	and $yu_2v_2^\omega$ are limits of $L$ by Lemma~\ref{lem-limits-of-product}
	and these two words are different by $x<_sy$. So in this case we can stop.
	
	From this point we can assume that $L_1$ is a prefix chain, that is, $L_1\subseteq\mathbf{Pref}(u_1v_1^\omega)$.
	By Lemma~\ref{lem-limits-of-product}, $u_1v_1^\omega$ is also a limit of $L$,
	and for each $u\in L_1$, the word $uu_2v_2^\omega$ is also a limit of $L$.
	Thus, we have to decide whether $u_1v_1^\omega=uu_2v_2^\omega$ holds for each $u\in L_1$.
	
	Now consider the direct product automaton $M=M_{u_1,v_1}\times M_{u_2,v_2}$, where in the automaton
	corresponding to $L_2$ we use primed states $q'$ in place of each state $q$.
	
	Obviously, $u_1v_1^\omega\neq uu_2v_2^\omega$ if and only if from the state $(q_0\cdot u,q'_0)$
	some state of the form $(\bot,q')$ or $(q,\bot')$ is reachable for some $q\neq\bot$.
	Thus, it suffices to determine the set of states $q$ in $M_1$ for which $q_0\cdot u=q$ for some $u\in L_1$,
	that is, for which $M_1(q)\cap L_1$ is nonempty, which is decidable since $L_1$ is context-free
	and $M_1(q)$ is regular. Then, for each such state $q$ we test whether a state of the form
	$(\bot,p')$ or $(p,\bot')$ is reachable from $(q,q'_0)$ and if so, then $uu_2v_2^\omega\neq u_1v_1^\omega$
	for some $u\in L_1$ and thus $L$ has at least two different limits.
	
	Now assume $uu_2v_2^\omega=u_1v_1^\omega$ for each $u\in L_1$. We claim that in this case $L$ has the unique
	limit $u_1v_1^\omega$. To see this, we first prove that for each prefix $w_0<_pu_1v_1^\omega$,
	there are only a finite number of words $v\in L$ with either $w_0<_sv$ or $v<_sw_0$.
	Suppose for the sake of contradiction that for some prefix $w_0$ of $u_1v_1^\omega$ there are
	infinitely many such words and let $w_0$ be the shortest such prefix. 
	By assumption, there are infinitely many words $xy\in L$, $x\in L_1$, $y\in L_2$,
	with either $xy<_sw_0$ or $w_0<_sxy$. Since $L_1\subseteq\mathbf{Pref}(u_1v_1^\omega)$ and $w_0<_pu_1v_1^\omega$,
	this can happen only if $x<_pw_0$. Since there are only finitely many prefixes of $w_0$,
	for some $u<_pw_0$ in $L_1$ there has to be an infinite number of words $y\in L_2$ with
	either $uy<_sw_0$ or $w_0<_suy$. Let us write $w_0=uv$.
	The condition $uy<_sw_0$ or $w_0<_suy$ is then equivalent $y<_sv$ or $v<_sy$ for infinitely many $y\in L_2$.
	But since we know that $uu_2v_2^\omega=u_1v_1^\omega$, $v$ is a prefix of $u_2v_2^\omega$,
	which is the unique limit of $L_2$ and thus only finitely many words $y\in L_2$ exist with 
	$y<_sv$ or $v<_sy$ by Lemma~\ref{lem-finite-strict-unique-limit}, yielding a contradiction.
	
	Hence in this case, $u_1v_1^\omega$ is the unique limit of $L$.	
\end{proof}

\begin{corollary}
	\label{cor-unique-alpha-limit}
	Assume $n\geq 0$ and $X_1,\ldots,X_n\in N\cup\Sigma$ are symbols so that for each $X_i$ we know
	whether $L(X_i)$ has a unique limit. Then it is computable whether $L(X_1\ldots X_n)$ has a unique limit.
\end{corollary}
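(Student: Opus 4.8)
The plan is to prove the statement by induction on $n$, peeling off the leftmost symbol $X_1$ and reducing to the two-factor case already settled in Lemma~\ref{lem-x1x2} together with the limit-preservation corollaries. The base cases are immediate: for $n=0$ the language $L(\varepsilon)=\{\varepsilon\}$ is finite and hence has no limit at all, so it certainly has no unique limit; and for $n=1$ the status of $L(X_1)$ is known by hypothesis. For the inductive step with $n\geq 2$, I would write $L=L(X_1)\cdot M$ where $M=L(X_2\ldots X_n)$. By the induction hypothesis applied to the $(n-1)$ symbols $X_2,\ldots,X_n$, whose individual statuses we know, we can decide whether $M$ has a unique limit; we may also decide whether $M$ is finite, since finiteness of a context-free language is decidable.

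The inductive step then splits into cases according to whether $X_1$ is a terminal and whether $M$ is finite. If $X_1$ is a terminal $a$, then $L=aM$ and by Corollary~\ref{cor-la-al} we have $\blim{L}=a\cdot\blim{M}$, so $w\mapsto aw$ is a bijection between the limits of $M$ and those of $L$; hence $L$ has a unique limit exactly when $M$ does, which we already know (and this holds uniformly, whether $M$ is finite or infinite). If $X_1$ is a nonterminal, then $L(X_1)$ is infinite by our standing assumptions. When $M$ is also infinite, both factors are infinite context-free languages whose unique-limit statuses are known, and Lemma~\ref{lem-x1x2} decides the status of the product. When $M$ is finite it consists of a single word $v$ (all of $X_2,\ldots,X_n$ being terminals), and then $L=L(X_1)v$, so Corollary~\ref{cor-uLv} gives $\blim{L}=\blim{L(X_1)}$; thus the status of $L$ coincides with that of $L(X_1)$, known by hypothesis. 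These cases are exhaustive, completing the induction.

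The one point that needs care — and the only real obstacle — is that Lemma~\ref{lem-x1x2} is phrased for two nonterminals of the grammar, whereas here the second factor $M=L(X_2\ldots X_n)$ is the language of a sentential form rather than of a single nonterminal. My plan is to observe that the proof of Lemma~\ref{lem-x1x2} never uses anything beyond the context-freeness of its two factors: it computes their (regular) limits via Lemma~\ref{lem-sequence}, which is already stated for arbitrary sentential forms; it tests the prefix-chain property, which is decidable for any context-free language; and it decides emptiness of the intersection of a context-free language with a regular one. Since $L(X_2\ldots X_n)$ is context-free and (in the relevant case) infinite, every ingredient of that proof applies verbatim, so Lemma~\ref{lem-x1x2} may be invoked with $M$ in the role of the second factor. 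The remaining bookkeeping — recognizing terminal factors and single-word tails and appealing to Corollaries~\ref{cor-la-al} and~\ref{cor-uLv} — is routine.
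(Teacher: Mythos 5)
Your proof is correct and takes essentially the same route as the paper: both decompose $L(X_1\ldots X_n)$ as the leftmost symbol times the remaining suffix and settle each step by Lemma~\ref{lem-x1x2} or Corollary~\ref{cor-la-al}. The paper sidesteps your one point of care --- that Lemma~\ref{lem-x1x2} is stated for nonterminals --- by introducing fresh nonterminals $Y_i$ with productions $Y_i\to X_iY_{i+1}$ so that the lemma applies literally, whereas you argue that its proof only uses context-freeness of the two factors; either device is fine, and your explicit treatment of the edge cases (terminal $X_1$, finite suffix via Corollary~\ref{cor-uLv}) is if anything slightly more careful than the paper's.
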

\begin{proof}
	Let us introduce the fresh nonterminals $Y_1,\ldots,Y_{n-1}$ and productions
	$Y_1\to X_1Y_2$, $Y_2\to X_2Y_3$,\ldots, $Y_{n-1}\to X_{n-1}X_n$. Applying Lemma~\ref{lem-x1x2}
	or Corollary~\ref{cor-la-al} (depending on whether $X_i$ is a nonterminal or a letter)
	for the nonterminals $Y_{n-1}$, $Y_{n-2}$, \ldots, $Y_1$ in this order we can successively decide whether
	each $L(Y_i)$ has a unique limit, proving the statement since $L(X_1\ldots X_n)=L(Y_1)$.
\end{proof}

\begin{corollary}
	\label{cor-alpha}
	Let $X$ be a nonterminal $X\to\alpha_1~|~\ldots~|~\alpha_k$ be the collection of all the escaping productions
	with left-hand side $X$. Assume we already know for each $Y\prec X$ whether $L(Y)$ has a unique limit.
	Then it is computable whether $L(\{\alpha_1,\ldots,\alpha_k\})$ has a unique limit.
\end{corollary}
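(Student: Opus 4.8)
The plan is to reduce the statement to the already-established results on products (Corollary~\ref{cor-unique-alpha-limit}) and on unions (Lemma~\ref{lem-union}). Write
$L=L(\{\alpha_1,\ldots,\alpha_k\})=\bigcup_{j\in[k]}L(\alpha_j)$. Applying Lemma~\ref{lem-union} repeatedly yields
$\blim{L}=\bigcup_{j\in[k]}\blim{L(\alpha_j)}$, so $L$ has a unique limit precisely when this finite union of limit sets is a singleton. It therefore suffices to (i) decide, for each $j$, whether $L(\alpha_j)$ has a unique limit and, when it does, compute that limit, and then (ii) test that exactly one distinct limit arises across all $j$.

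For step (i), fix an escaping production $X\to\alpha_j$ and write $\alpha_j=X_1\ldots X_{n_j}$. Because the production is escaping, every symbol satisfies $X_i\prec X$. For each nonterminal occurring among the $X_i$ we know by hypothesis whether its language has a unique limit, and for each terminal letter $a$ the language $L(a)=\{a\}$ is finite and hence has no limit at all (in particular no unique one), which we know trivially. Thus the hypotheses of Corollary~\ref{cor-unique-alpha-limit} are met for $\alpha_j$, so we can decide whether $L(\alpha_j)$ has a unique limit; when it does, Lemma~\ref{lem-sequence} lets us compute this (necessarily regular) word $w_j=u_jv_j^\omega$.

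For step (ii), observe first that $L(\alpha_j)$ is infinite exactly when $\alpha_j$ contains at least one nonterminal (each nonterminal generates an infinite language and each letter a singleton, while a product of nonempty languages is infinite iff some factor is), and that by Lemma~\ref{lem-infinite} a finite $L(\alpha_j)$ contributes no element to $\blim{L}$ whereas an infinite one contributes at least one. Hence if some infinite $L(\alpha_j)$ fails to have a unique limit, then $\blim{L}$ already contains at least two words and $L$ has no unique limit. Otherwise every infinite $L(\alpha_j)$ has a single computed limit $w_j$ and $\blim{L}=\{w_j:L(\alpha_j)\text{ infinite}\}$; in this case $L$ has a unique limit iff at least one $\alpha_j$ is infinite and all the corresponding $w_j$ coincide.

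The only genuinely algorithmic point beyond invoking the earlier results is the final equality test comparing the finitely many regular words $w_j$. Since each is ultimately periodic and given explicitly as $u_jv_j^\omega$, this is routine, for instance via the product-automaton comparison already used in the proof of Lemma~\ref{lem-x1x2}. I expect this equality check, together with the bookkeeping that separates the finite from the infinite $\alpha_j$, to be the only place requiring care; the two reductions themselves are immediate consequences of Lemma~\ref{lem-union} and Corollary~\ref{cor-unique-alpha-limit}.
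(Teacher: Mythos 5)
Your proof is correct and follows essentially the same route as the paper's: decide each $L(\alpha_j)$ via Corollary~\ref{cor-unique-alpha-limit}, combine via Lemma~\ref{lem-union}, and compare the computable regular limits $u_jv_j^\omega$ obtained from Lemma~\ref{lem-sequence}. Your treatment is in fact slightly more explicit than the paper's in separating the finite $L(\alpha_j)$ (which contribute no limits) from the infinite ones, but this is just a careful spelling-out of the same argument.
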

\begin{proof}
	By Corollary~\ref{cor-unique-alpha-limit}, it is computable for each $\alpha_i$ whether each $L(\alpha_i)$
	is finite or has a unique limit.
	If not, then neither has their union (by Lemma~\ref{lem-union}).
	If each of the languages $L(\alpha_i)$ is either finite or has a unique limit, then their union
	has a unique limit if and only if all the limits are the same.
	But this is decidable since these languages are context-free, hence their unique limit is a computable
	regular word by Lemma~\ref{lem-sequence}, and the equivalence of these words is decidable.
\end{proof}

\begin{lemma}
	\label{lem-recursive-nonprefix-alpha-x-beta}
	Assume $X$ is a recursive nonterminal, $L(X)$ is not a prefix chain
	and for some nonterminal $X'\approx X$ there exists a 
	production $X'\to \alpha X''\beta$ with $\beta$ containing at least one nonterminal.
	
	Then $L(X)$ has at least two limits.	
\end{lemma}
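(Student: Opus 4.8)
The plan is to exhibit two distinct $\omega$-words, each of which is a limit of $L(X)$. Here $X''$ denotes the recursive occurrence in the production, so that $X''\approx X$; this is the property the argument genuinely uses (for a non-recursive $X''$ the construction below has nothing to branch on). The first thing I would record is that, just like $L(X)$, the language $L(X'')$ is \emph{not} a prefix chain. Indeed, from $X''\approx X$ we have $X\preceq X''$, i.e. $X''\Rightarrow^* uXv$ for some words $u,v$, hence $uL(X)v\subseteq L(X'')$; since $L(X)$ is not a prefix chain there are words $x_1<_s x_2$ in $L(X)$, and then $ux_1v<_sux_2v$ (the branching of $x_1,x_2$ is merely shifted by $|u|$ and is untouched by the common suffix $v$). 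So there are words $g_1<_s g_2$ in $L(X'')$.

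Next I would build the two limits from a single derivation skeleton. Since $X'\preceq X$, fix a derivation $X\Rightarrow^*\mu X'\nu$ and apply the production to reach $\mu\,\alpha\,X''\,\beta_1 Z\,\beta_2\,\nu$, where $Z$ is a nonterminal occurring in $\beta$ (writing $\beta=\beta_1 Z\beta_2$). Now freeze one terminal derivation of each of $\mu,\alpha,\beta_1$ and of the whole tail $\beta_2\nu$, call the resulting words $m,a,b_1,h$, and leave both $X''$ and $Z$ free. Setting $X''\Rightarrow^* g_i$ this yields, for $i\in\{1,2\}$, the sublanguages $U_i\,L(Z)\,h\subseteq L(X)$ with the fixed prefixes $U_i:=m\,a\,g_i\,b_1$. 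Because $L(Z)$ is infinite it has a limit $\eta$ by Lemma~\ref{lem-infinite}, and by Corollary~\ref{cor-uLv} we get $\blim{U_i\,L(Z)\,h}=U_i\cdot\blim{L(Z)}\ni U_i\eta$. Finally $\blim{L'}\subseteq\blim{L}$ whenever $L'\subseteq L$ (immediate from the definition, or from Lemma~\ref{lem-union} applied to $L=L'\cup L$), so both $U_1\eta$ and $U_2\eta$ are limits of $L(X)$.

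It then remains to see that $U_1\eta\neq U_2\eta$, which is the heart of the matter. Writing the strict branching as $g_1=\pi c\sigma_1$ and $g_2=\pi d\sigma_2$ with letters $c<d$, the two words $U_1\eta=m\,a\,\pi\,c\,\sigma_1 b_1\eta$ and $U_2\eta=m\,a\,\pi\,d\,\sigma_2 b_1\eta$ agree on the prefix $ma\pi$ and carry the distinct letters $c,d$ at the single position $|ma\pi|$; hence they differ at a finite coordinate and are distinct $\omega$-words, giving the two required limits.

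I expect the main obstacle to be exactly this distinctness, together with understanding why the construction must be arranged as above. A naive attempt to pump the recursion so as to create branching fails: since the grammar has no left recursion, the self-recursion of $X$ always carries a nonempty left context, so any branching introduced deep in a derivation is pushed off to infinity and washed out, leaving only a single limit. The two features the hypothesis supplies are precisely what avoids this. The strict branching lives among the \emph{finite} words $g_1,g_2$ of $L(X'')$, so it sits at a \emph{fixed} position that survives passage to the limit; and the nonterminal $Z$ inside $\beta$ furnishes a common infinite tail $\eta$ that upgrades each finite branch into a genuine limit word. Establishing that $L(X'')$ is not a prefix chain (the transfer step in the first paragraph) is the other place where care is needed, and it is exactly where the hypotheses ``$L(X)$ not a prefix chain'' and ``$X''\approx X$'' are both consumed.
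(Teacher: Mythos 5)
Your proof is correct and takes essentially the same route as the paper's: both arguments manufacture two limits consisting of a strictly $<_s$-branching finite prefix followed by a common limit tail supplied by the nonterminal content of $\beta$, using Lemma~\ref{lem-infinite}, Corollary~\ref{cor-uLv}, and monotonicity of limits under language inclusion. The only cosmetic difference is where the branching words live: the paper keeps $u<_sv$ in $L(X)$ itself and uses $X\preceq X''$ to re-embed $X$ via a derivation $X\Rightarrow^*u_1Xu_2\beta u_3$, whereas you use that same relation to transfer the non-prefix-chain property to $L(X'')$ and substitute $g_1<_sg_2$ directly at the occurrence of $X''$ in the production.
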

\begin{proof}
	Let $u<_sv$ be members of $L(X)$. Since $\beta$ contains a nonterminal, $L(\beta)$ is
	infinite and has a limit $w$ by Lemma~\ref{lem-infinite}. By the conditions on the
	recursive nonterminal $X$, we get $X\Rightarrow^*u_1Xu_2\beta u_3$ for some words $u_1,u_2,u_3\in\Sigma^*$.
	By Lemma~\ref{cor-uLv}, both $u_1uu_2w$ and $u_1vu_2w$ are limits of $L(X)$ and they are
	distinct by $u<_sv$.
\end{proof}

\begin{lemma}
	\label{lem-recursive-x}
	Assume $L(G)$ is scattered.
	Then it is decidable for each nonterminal $X$ whether $L(X)$ has a unique limit.
\end{lemma}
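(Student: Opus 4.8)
The plan is to proceed by well-founded induction along the relation $\prec$, which is a strict partial order on the finite symbol set and hence well-founded. Since every nonterminal except possibly the start symbol is recursive, for a non-recursive $X$ all productions are escaping and Corollary~\ref{cor-alpha} decides the question outright; so I would assume $X$ is recursive and treat its whole $\approx$-class $[X]=\{Y:Y\approx X\}$ at one induction stage, using the induction hypothesis for the symbols $\prec X$ (none of $[X]$ is $\prec X$). By that hypothesis and Lemma~\ref{lem-sequence}, for each $Y\prec X$ it is already known whether $L(Y)$ has a unique, computable, regular limit; in particular Corollary~\ref{cor-alpha} lets me decide, for each $Y\in[X]$, whether the language $E_Y$ generated by the escaping productions of $Y$ is finite or has a computable unique limit. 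Writing $u_Y$ for the primitive word of the recursive nonterminal $Y$ and $u_{X,Y}$ for the left-context word of the pair $X\approx Y$ (both computable by the recalled facts on scattered grammars), these are the only data the decision will use.

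First I would test, via Lemma~2 of~\cite{GelleIvanGandalf}, whether $L(X)$ is a prefix chain $L(X)\subseteq\mathbf{Pref}(w)$ with $w$ computable. If it is, then $w$ is the unique limit: $w$ is a limit by Lemma~\ref{lem-sup-is-limit}, and every proper prefix of any limit $w'$ lies in $\mathbf{Pref}(w)$, forcing $w'=w$. If $L(X)$ is not a prefix chain I would inspect the finitely many non-escaping productions of the class: if some $X'\approx X$ has a production $X'\to\alpha X''\beta$ with $X''\approx X$ and $\beta$ containing a nonterminal, then Lemma~\ref{lem-recursive-nonprefix-alpha-x-beta} already yields two limits and the answer is ``no''.

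In the remaining case every non-escaping production of the class has the shape $X'\to\alpha X''u$, where $X''\approx X$ is the only class-nonterminal on the right-hand side, $\alpha$ is a sentential form over terminals and symbols $\prec X$, and $u\in\Sigma^*$ (any class-nonterminal standing in front of another nonterminal would have triggered the previous case). Thus every derivation runs along a spine of class-nonterminals, and a leftmost derivation reaching a class-nonterminal $Y$ builds a purely terminal left context which, by the scattered primitive-word facts, lies in $u_{X,Y}u_Y^*$, while the right context is an irrelevant terminal suffix by Corollary~\ref{cor-uLv}. I therefore expect to establish, using Lemma~\ref{lem-limits-of-product} and Lemma~\ref{lem-union}, that
\[
	\blim{L(X)}=\{u_X^\omega\}\ \cup\ \bigcup_{Y\in[X]}\ \bigcup_{j\ge 0}\ u_{X,Y}\,u_Y^{\,j}\,\blim{E_Y},
\]
where the isolated point $u_X^\omega$ is the limit of the ever-growing left contexts of the infinite recursion (all of its incarnations coincide because $u_{X,Y}u_Y^\omega=u_X^\omega$), and each remaining term collects the limits obtained by escaping from $Y$ after $j$ recursive turns. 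Granting this, $L(X)$ has a unique limit exactly when every term equals $\{u_X^\omega\}$; since $u_{X,Y}u_Y^{\,j}u_Y^\omega=u_X^\omega$ this holds precisely when $\blim{E_Y}=\{u_Y^\omega\}$ for every $Y\in[X]$ with $E_Y$ infinite, a condition I can test by computing $\blim{E_Y}$ (Corollary~\ref{cor-alpha}) and $u_Y$ and comparing these regular $\omega$-words.

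The hard part will be the displayed characterization in this last case, and in particular the claim that the left contexts are confined to the prefix chain $u_{X,Y}u_Y^*$: this is exactly where the scatteredness of $L(G)$ is indispensable, since it is the primitive-word property that prevents the subderivations inside the $\alpha$-parts from producing unboundedly many incomparable left contexts, which would otherwise generate a dense family of limits. Once the left contexts are pinned down, separating the recursion limit $u_X^\omega$ from the finitely many families of escape limits, and verifying that the right terminal contexts cannot create new limits, should be routine, so that the whole uniqueness test reduces to finitely many decidable equalities between computable regular $\omega$-words.
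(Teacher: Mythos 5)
Your plan follows the same route as the paper's proof: the same well-founded induction along $\prec$, the same case analysis (prefix chain; a production $X'\to\alpha X''\beta$ with a nonterminal in $\beta$, handled by Lemma~\ref{lem-recursive-nonprefix-alpha-x-beta}; the remaining ``spine'' case $X'\to\alpha X''u$), and in the last case exactly the paper's decision criterion: each escape language $E_Y$, $Y\approx X$, must be finite or have unique limit equal to $u_Y^\omega$. The ``no'' half of that criterion is sound as you argue it: a limit of some $E_Y$ other than $u_Y^\omega$, shifted by an occurring left context, together with $u_X^\omega$ itself (Lemma~\ref{lem-recursive-one-or-infinite}), yields two limits of $L(X)$.

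The genuine gap is in the ``yes'' half, i.e.\ in your displayed characterization of $\blim{L(X)}$, and it is not where you locate it. The confinement of left contexts to $u_{X,Y}u_Y^*$ is not the hard part --- it is verbatim the fact about scattered grammars recalled from~\cite{10.1007/978-3-642-22321-1_19} in the preliminaries, so you get it for free. What actually needs proof is the inclusion $\blim{L(X)}\subseteq\{u_X^\omega\}\cup\bigcup_{Y,j}u_{X,Y}u_Y^j\blim{E_Y}$, and this cannot follow from Lemma~\ref{lem-union} and Lemma~\ref{lem-limits-of-product} as you propose: those cover binary products and \emph{finite} unions, whereas here $L(X)$ decomposes as an \emph{infinite} union $\bigcup_{(Y,w,v)}wE_Yv$, and an infinite union can have limits that are limits of no member (e.g.\ $\bigcup_{n}\{a^nb\}$ has limit $a^\omega$ although every member is finite and limit-free). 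So one must explicitly rule out accumulation limits other than $u_X^\omega$. The paper does this by fixing a prefix $w_0<_pu_X^\omega$, assuming infinitely many $u\in L(X)$ satisfy $u<_sw_0$ or $w_0<_su$, noting that the recursion depth $t$ of such words is bounded (they are not $<_p$-comparable with $w_0$), that consequently the terminal right contexts $v$ are bounded as well \emph{because the grammar has no left-recursive nonterminals}, and then applying a pigeonhole argument to force infinitely many of these words into a single language $u_X^tu_{X,X'}L(\alpha)v$, contradicting Lemma~\ref{lem-finite-strict-unique-limit} applied to $L(\alpha)$. Some boundedness-plus-pigeonhole argument of this kind --- including the control of the suffixes $v$, which your sketch dismisses as ``an irrelevant terminal suffix'' --- is indispensable; without it, the step you call routine is precisely the open part of the proof.
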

\begin{proof}
	We prove the statement by induction on $\prec$. So let $X$ be a nonterminal and 
	assume we already know for each $Y\prec X$ whether $L(Y)$ has a unique limit.
	
	If $X$ is nonrecursive, and $X\to\alpha_1~|~\ldots~|~\alpha_k$ are all the alternatives of $X$,
	then the question is decidable by Corollary~\ref{cor-alpha}.
	
	So let $X$ be a recursive nonterminal. If $L(Y)$ has at least two limits for some $Y\prec X$,
	then by Corollary~\ref{cor-uLv} so does $L(X)$ and we are done. So we can assume from now on
	that each $L(Y)$ with $Y\prec X$ has exactly one limit. This limit is a computable regular word.
	
	If $L(X)$ is a prefix chain, then its supremum is its
	unique limit and we can stop. So we can assume that $L(X)$ is not a prefix chain.
	Now if there exist some production of the form $X'\to \alpha X''\beta$ with $X'\approx X''\approx X$
	and $\beta$ containing at least one nonterminal, then by Lemma~\ref{lem-recursive-nonprefix-alpha-x-beta},
	$L(X)$ has at least two limits and we can stop. 
	
	Otherwise, we can assume that each component production in the component of $X$ has the form
	$X'\to \alpha X''u$ for some $u\in\Sigma^*$, $X'\approx X''\approx X$. Since $L(X)$ is scattered,
	for each such $\alpha$ it has to be the case that $L(\alpha)\subseteq u_{X',X''}^{}u_{X''}^*$.
	
	Now let $X'\approx X$ be a nonterminal and $\alpha_1,\ldots,\alpha_k$ all the escaping alternatives
	of $X'$. If $L(\{\alpha_1,\ldots,\alpha_k\})$ has at least two limits, then so does $L(X')$ and $L(X)$
	and we are done. Otherwise, if $L(\{\alpha_1,\ldots,\alpha_k\})$ is infinite, then its unique limit
	is a computable word. On the other hand, for each recursive nonterminal $X'$ the word $u_{X'}^\omega$
	is a limit of $L(X')$, and by $L(\{\alpha_1,\ldots,\alpha_k\})\subseteq L(X')$, the two limits has to
	coincide (which is decidable). If they are not the same, then again, $L(X)$ has at least two limits
	and we can stop.
	
	Hence we can assume that for each $X'\approx X$, the language $L(X')$ has the limit $u_{X'}^\omega$
	which is the same as the unique limit of $L(\{\alpha_1,\ldots,\alpha_k\})$ if this latter language is infinite.	
	
	We claim that in this case, $L(X)$ has the unique limit $u_X^\omega$. To see this, we apply
	Lemma~\ref{lem-finite-strict-unique-limit} and show that for each prefix $w_0$ of $u_X^\omega$,
	there are only finitely many words $u\in L(X)$ with either $w_0<_su$ or $u<_sw_0$.
	
	Assume to the contrary that $w_0<_pu_X^\omega$ and there are infinitely many words $u\in L(X)$ with
	either $w_0<_su$ or $u<_sw_0$. Each word $u\in L(X)$ can be derived from $X$ using a leftmost
	derivation sequence resulting in a sentential form $u_X^tu_{X,X'}\alpha v$ for some $t\geq 0$
	so that $X'\to\alpha$ is an
	escaping production from the component of $X$ and $u\in u_X^tu_{X,X'}L(\alpha)v$.
	Since $u$ and $w_0<_pu_X^\omega$ are not related by $<_p$, we have an upper bound for $t$,
	which, as $G$ does not contain left-recursive nonterminals, places an upper bound for $|v|$.
	Hence, there are only finitely many possibilities for picking $t\geq 0$, $X'$, $\alpha$ and $v$,
	thus for some combination of them, there are infinitely many such words $u$ belonging to the
	same language $u_X^tu_{X,X'}L(\alpha)v$. So we can write each such $u$ as $u=u_X^tu_{X,X'}u'v$
	with $u'\in L(\alpha)$, and we can write $w_0$ as $w_0=u_X^tu_{X,X'}w'_0$, that is, $w'_0<_p u_{X'}^\omega$.
	This yields that $u'v<_sw'_0$ or $w'_0<_su'v$ for infinitely many words $u'\in L(\alpha)$.
	Thus, there are infinitely words $u'\in L(\alpha)$ of length at least $|w'_0|$ with either $u'v<_sw'_0$ or
	$w'_0<_su'v$, hence with either $u'<_sw'_0$ or $w'_0<_su'$, which is a contradiction, since
	by Lemma~\ref{lem-finite-strict-unique-limit} this would yield that $L(\alpha)$ has at least two
	limits, which we already handled in a former case.
\end{proof}

\subsection{Finitely many limits}
In this part we extend the results of the previous subsection for the context-free languages having
a finite number of limits and get the main result of the paper: it is decidable whether a context-free
language has a scattered order type of rank at most one, and if so, then its order type is 
effectively computable.

\begin{lemma}
	\label{lem-uvomega-islimit-decidable}
	It is decidable for any context-free language $L$ and regular word $w=uv^\omega$
	whether $w$ is a limit of $L$.
\end{lemma}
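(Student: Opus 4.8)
The plan is to reduce the question to a decidable finiteness test for a context-free language. Write $P=\{x\in\Sigma^*: x<_p u\text{ for some }u\in L\}$ for the set of proper prefixes of words of $L$. Directly from the definition of a limit, $w$ is a limit of $L$ if and only if every proper prefix of $w$ belongs to $P$, that is, $\mathbf{Pref}(w)\subseteq P$. Two features of this pair of languages will do the work. First, $P$ is \emph{prefix-closed}: if $y$ is a prefix of $x$ and $x<_p u\in L$, then $y<_p u$, so $y\in P$. Second, because $w=uv^\omega$ is ultimately periodic, $\mathbf{Pref}(w)$ is a regular language that I can compute directly from $u$ and $v$; it is infinite and is totally ordered by $<_p$, containing exactly one word of each length.

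Next I would replace the (a priori problematic) inclusion test by a finiteness test, proving
\[
\mathbf{Pref}(w)\subseteq P \quad\Longleftrightarrow\quad \mathbf{Pref}(w)\cap P\text{ is infinite.}
\]
The left-to-right direction is immediate, since $\mathbf{Pref}(w)$ is itself infinite. For the converse, assume $\mathbf{Pref}(w)\cap P$ is infinite and fix an arbitrary $w_0<_p w$; as $\mathbf{Pref}(w)$ is a chain with one word per length, the infinite set $\mathbf{Pref}(w)\cap P$ must contain some word $w_1$ longer than $w_0$, and then $w_0$ is a prefix of $w_1\in P$, so prefix-closedness of $P$ yields $w_0\in P$. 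Since $w_0$ was arbitrary, $\mathbf{Pref}(w)\subseteq P$.

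It then remains to argue effectiveness. The language $P$ is context-free: it is the image of the context-free language $L$ under the rational transduction that copies an initial segment of its input and then consumes at least one further letter without output; alternatively one writes down an explicit grammar for the proper prefixes of $L$. Intersecting the regular language $\mathbf{Pref}(w)$ with the context-free language $P$ gives a context-free language $\mathbf{Pref}(w)\cap P$, and infiniteness of a context-free language is decidable, so testing it decides whether $w$ is a limit of $L$. (When $L$ is finite the intersection is finite and the procedure correctly answers ``no'', consistent with limits being defined only for infinite languages.)

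The step I expect to be the real obstacle -- and the only non-routine one -- is the reduction in the second paragraph. The naive route, testing $\mathbf{Pref}(w)\subseteq P$ directly, is doomed: inclusion of a regular language in a context-free one is undecidable in general, as it subsumes context-free universality. What rescues the argument is that $\mathbf{Pref}(w)$ is not an arbitrary regular language but a single prefix chain, together with the prefix-closedness of $P$; exactly this structure lets me trade the undecidable inclusion for the decidable finiteness test.
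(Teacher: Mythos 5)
Your proof is correct, and it reaches the statement by a genuinely different construction, although the underlying strategy is the same as the paper's: both reduce ``$w$ is a limit of $L$'' to the decidable finiteness problem for an effectively constructed context-free language. The paper quotients out $u$ and applies a generalized sequential mapping that strips leading copies of $v$ and counts them in unary, so that $uv^\omega$ is a limit of $L$ if and only if $f(u^{-1}L)\subseteq a^*$ is infinite; the closure properties invoked are left quotient and gsm images. You instead form the proper-prefix closure $P$ of $L$ (context-free via a rational transduction) and intersect it with the regular language $\mathbf{Pref}(w)$, and your key step --- trading the inclusion $\mathbf{Pref}(w)\subseteq P$, which is undecidable for arbitrary regular-in-context-free pairs, for the finiteness of $\mathbf{Pref}(w)\cap P$ --- is justified exactly by the prefix-closedness of $P$ and the fact that $\mathbf{Pref}(w)$ is a chain with one word per length. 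This makes explicit an ``arbitrarily long prefixes suffice'' argument that the paper's terse equivalence ($v^\omega$ is a limit of $L'$ iff $f(L')$ is infinite) leaves implicit, and it uses only textbook closure properties (prefix closure, intersection with a regular set, decidability of finiteness); the paper's gsm trick is more compact and avoids introducing $P$, but buys nothing in generality, since both arguments apply precisely to ultimately periodic $w$, whose prefix set is regular. Your handling of the finite-$L$ edge case is also consistent with the paper's convention that limits are defined only for infinite languages.
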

\begin{proof}
	Let $L\subseteq\Sigma^*$ be a context-free language and
	consider the generalized sequential mapping $f:\Sigma^*\to a^*$ defined as
	\[f(x)=\begin{cases}a\cdot f(y)&\hbox{if }x=vy\hbox{ for some }y\in\Sigma^*\\\varepsilon&\hbox{otherwise}.\end{cases}\]
	Now for any word $x$, $f(x)=a^n$ for the unique $n$ such that $x=v^ny$ for some $y$ not having $v$ as prefix.
	Thus, $v^\omega$ is a limit of a language $L'$ if and only if $f(L')$ is infinite;
	hence, by Corollary~\ref{cor-uLv}, $w=uv^\omega$ is a limit of $L$ if and only if $f(u^{-1}L)$ is infinite.
	Since the class of context-free languages is effectively closed under left quotients and generalized
	sequential mappings, and their finiteness problem is decidable, the claim is proved.
\end{proof}

\begin{lemma}
	\label{lem-recursive-one-or-infinite}
	If $X$ is a recursive nonterminal, then $u_X^\omega \in \blim{X}$ and if some $w\neq u_X^\omega$ is also a member of $\blim{X}$, then $\blim{X}$ is infinite.
\end{lemma}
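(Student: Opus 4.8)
The plan is to treat the two assertions separately, exploiting in both the self-embedding structure of a recursive nonterminal. Since $X$ is recursive and $G$ has no left-recursive nonterminals, there is a leftmost derivation $X\Rightarrow^+u X\gamma$ whose terminal prefix to the left of the re-occurring $X$ is nonempty; by the defining property of $u_X$ recalled above, that prefix lies in $u_X^+$, so I may fix a derivation $X\Rightarrow^+u_X^m X\gamma$ with $m\geq 1$ and $\gamma$ a sentential form. As every nonterminal is productive, $L(\gamma)\neq\emptyset$, and I fix some $z\in L(\gamma)$.

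For the first assertion, that $u_X^\omega\in\blim{X}$, I would iterate the above derivation $t$ times to obtain $X\Rightarrow^+u_X^{tm}X\gamma_t$ for a suitable sentential form $\gamma_t$, and then derive the remaining $X$ and $\gamma_t$ to terminal words, producing a word of $L(X)$ of the shape $u_X^{tm}z_0z_t$ whose prefix of length $tm|u_X|$ is exactly $u_X^{tm}$. Given any proper prefix $w_0<_pu_X^\omega$, choosing $t$ with $tm|u_X|>|w_0|$ makes $w_0$ a proper prefix of $u_X^{tm}$, hence of $u_X^{tm}z_0z_t\in L(X)$. This is precisely the defining condition for $u_X^\omega$ to be a limit of $L(X)$.

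The heart of the lemma is the second assertion, and here the key step is the closure property $u_X^m\blim{X}\subseteq\blim{X}$. To obtain it, note that appending the fixed $z$ to the inner derivation yields $u_X^m L(X)z\subseteq L(X)$; writing $L(X)$ as the union of $u_X^m L(X)z$ with the remaining words and applying Lemma~\ref{lem-union} gives $\blim{u_X^m L(X)z}\subseteq\blim{X}$, while Corollary~\ref{cor-uLv} evaluates the left-hand side as $u_X^m\blim{X}$. Iterating, $u_X^{km}\blim{X}\subseteq\blim{X}$ for every $k\geq 0$. Now if $w\in\blim{X}$ with $w\neq u_X^\omega$, then $u_X^{km}w\in\blim{X}$ for all $k\geq 0$; since $w$ first disagrees with $u_X^\omega$ at some position $p$ and $u_X^\omega$ is periodic with period $|u_X|$, the word $u_X^{km}w$ first disagrees with $u_X^\omega$ at position $km|u_X|+p$, so these words are pairwise distinct as $k$ varies. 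Hence $\blim{X}$ is infinite.

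I expect the main obstacle to be the bookkeeping in this closure step: one must check carefully that $u_X^m L(X)z$ is genuinely a sublanguage of $L(X)$ (which uses only the self-embedding derivation together with productivity of $\gamma$) and then invoke the stated lemmas in the right order to turn the inclusion $\blim{u_X^m L(X)z}\subseteq\blim{X}$ into the infinite family of limits. The distinctness of the iterated limits itself is immediate once one tracks the first position of disagreement with the periodic word $u_X^\omega$.
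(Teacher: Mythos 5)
Your proposal is correct and follows essentially the same route as the paper: both establish $u_X^\omega\in\blim{X}$ by pumping the self-embedding derivation $X\Rightarrow^+u_X^mX\gamma$ to obtain words of $L(X)$ with arbitrarily long prefixes of $u_X^\omega$, and both show that any other limit $w\neq u_X^\omega$ spawns the infinite family $u_X^{km}w$ of pairwise distinct limits. Your write-up is in fact somewhat more careful than the paper's, which asserts that the shifted words are limits without spelling out the closure step $u_X^m\blim{X}\subseteq\blim{X}$ via Lemma~\ref{lem-union} and Corollary~\ref{cor-uLv} that you make explicit.
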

\begin{proof}
	Since $X\Rightarrow^* u_X^nX\alpha$ holds for each recursive nonterminals, there exists a word $v\in L(X)$ for each $u\in\mathbf{Pref}(u_X^\omega)$ such that $u<_p v$.
	
	If $w\neq u_X^\omega$ is also a limit of $L(X)$, then it can be written as $w=ubw'$, where $u<_pu_X^\omega$, $ub\not<_pu_X^\omega$ and $w'\in\Sigma^\omega$. So if we consider a derivation of the form
	$X\Rightarrow^* u_X^nX\alpha$, we get that each $(u_x^n)^kubw'$, $k>0$ is a limit, thus $\blim{X}$ is infinite
	as these words are pairwise different.
\end{proof}

\begin{lemma}
	\label{lem-kvomega-is-finite}
	Assume $K$ is a context-free language and $v$ is a nonempty word. Then it is decidable whether $Kv^\omega$ is finite
	and if so, its members (which are regular words) can be effectively enumerated.
\end{lemma}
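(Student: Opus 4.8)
The plan is to reduce the question about the set of $\omega$-words $Kv^\omega=\{uv^\omega:u\in K\}$ to an ordinary finiteness question about a context-free language, via a canonical form for words modulo the relation $u\sim u'\iff uv^\omega=u'v^\omega$. First I would replace $v$ by its primitive root: let $p$ be the (computable) primitive word with $v\in p^+$, so that $v^\omega=p^\omega$ and hence $Kv^\omega=Kp^\omega$, and work with $p$ from now on. For a word $u$ let $c(u)$ denote its \emph{canonical form}, namely the unique word not ending in $p$ with $u\in c(u)\,p^*$ (obtained by deleting the maximal trailing power of $p$). The central claim is that
\[
uv^\omega=u'v^\omega\quad\Longleftrightarrow\quad c(u)=c(u').
\]
Granting this, the distinct members of $Kv^\omega$ are in bijection with the canonical forms of words of $K$, so that $\lvert Kv^\omega\rvert=\lvert c(K)\rvert$, where $c(K)=\{c(u):u\in K\}$.

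The bulk of the work is the central claim, and this is where I expect the only genuine obstacle, namely the combinatorics of periodic words. Assume $\lvert u\rvert\le\lvert u'\rvert$ and $uv^\omega=u'v^\omega$. Comparing the first $\lvert u\rvert$ letters of the two equal $\omega$-words forces $u$ to be a prefix of $u'$, say $u'=uz$; cancelling $u$ then gives $zp^\omega=p^\omega$, so in particular $z$ is a prefix of $p^\omega$. Since $p$ is primitive, the only left shifts that fix $p^\omega$ are by multiples of $\lvert p\rvert$, whence $\lvert z\rvert$ is a multiple of $\lvert p\rvert$ and $z\in p^*$; this yields $c(u)=c(u')$. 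The converse is immediate, as $w\,p^i\,p^\omega=w\,p^\omega$ for every $i\ge 0$, so $c(u)=c(u')=w$ gives $uv^\omega=w\,p^\omega=u'v^\omega$.

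Finally I would observe that $c(K)$ is an effectively constructible context-free language. Indeed, a word $w$ lies in $c(K)$ exactly when $w$ does not end in $p$ and $w\,p^j\in K$ for some $j\ge 0$, so
\[
c(K)=\bigl(K/p^*\bigr)\cap\overline{\Sigma^*p},
\]
the intersection of the right quotient of $K$ by the regular language $p^*$ with the regular language of words not ending in $p$; both operations preserve context-freeness effectively. Since finiteness of a context-free language is decidable, and when finite its members are length-bounded (by a pumping constant) and hence listable, we can decide whether $c(K)$ is finite and thereby whether $Kv^\omega$ is finite. If it is, we enumerate $c(K)=\{w_1,\dots,w_m\}$ and output the regular words $w_1p^\omega,\dots,w_mp^\omega$, which are precisely the (regular) members of $Kv^\omega$.
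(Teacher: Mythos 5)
Your proof is correct, and at its core it follows the same strategy as the paper's: represent each element $uv^\omega$ of $Kv^\omega$ by a canonical finite word obtained by deleting trailing periods, observe that the language of these canonical representatives is an effectively constructible context-free language, and then invoke decidability of finiteness for context-free languages. The differences lie in the machinery and in one genuinely useful extra step. For the effectiveness part, you realize the canonical-form language as $(K/p^*)\cap\overline{\Sigma^*p}$, using closure of context-free languages under right quotient by a regular language and intersection with a regular language, whereas the paper reverses $K$, applies a generalized sequential mapping stripping leading occurrences of $v^R$, and reverses back; both rest on standard effective closure properties, so this is essentially a matter of taste. More substantively, you first replace $v$ by its primitive root $p$, and this is exactly what makes your central claim ($uv^\omega=u'v^\omega \Leftrightarrow c(u)=c(u')$) literally true, so that $w\mapsto wp^\omega$ is a bijection from $c(K)$ onto $Kv^\omega$. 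The paper strips trailing copies of $v$ itself with no primitivity normalization, and its analogous biconditional (``$u\in\bigl(f(K^R)\bigr)^R$ iff $u$ does not end with $v$ and $uv^\omega\in Kv^\omega$'') actually fails for imprimitive $v$: with $v=aa$ and $K=\{a\}$, the word $\varepsilon$ does not end with $v$ and $\varepsilon v^\omega=a^\omega\in Kv^\omega$, yet $\varepsilon$ is not obtained from any member of $K$ by stripping trailing $v$'s. The paper's conclusion still stands because the stripping map remains surjective and finite-to-one onto $Kv^\omega$, but your normalization removes the issue altogether and moreover guarantees that your final enumeration lists each member of $Kv^\omega$ exactly once, with no duplicates. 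The single fact you use without proof---that when $p$ is primitive, the only left shifts fixing $p^\omega$ are by multiples of $|p|$---is standard combinatorics on words (it follows from the Fine--Wilf periodicity theorem), so citing it is fair rather than a gap.
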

\begin{proof}
Consider the generalized sequential mapping $f: \Sigma^*\to \Sigma^*$ defined as
$$f(x) = \begin{cases}
f(y)&\hbox{if } x=v^Ry \hbox{ for some } y\in\Sigma^*\\
x&\hbox{otherwise.}
\end{cases}$$

Now for any word $x$, $f(x)=y$ for some $y$ not having $v^R$ as prefix such that $x=(v^R)^ky$ for some $k\geq 0$,
that is, $f$ strips away the leading $v^R$s of its input.
So, we have that $\bigl(f(K^R)\bigr)^R$ consists of those words we get from members of $K$, stripping
away their trailing $v$s.
Now $u\in \bigl(f(K^R)\bigr)^R$ if and only if $u$ does not end with $v$ and $uv^\omega\in Kv^\omega$.
Moreover, $Kv^\omega$ is finite if and only if there exist some $u_1, u_2, \ldots, u_n\in \Sigma^*$ such that for each $i\in[n]$ the word $u_i$ does not end with $v$ and  $Kv^\omega = \{u_iv^\omega ~|~ i\in [n]\}$.
So we get that $Kv^\omega$ is finite if and only if so is $\bigl(f(K^R)\bigr)^R$ which is decidable since the class of context-free
languages is effectively closed under reversal and generalized sequential mappings, and their finiteness problem is also
decidable. In this case, members of $\bigl(f(K^R)\bigr)^R=\{u_1,\ldots,u_n\}$ can also be effectively enumerated and
$Kv^\omega=\{u_jv^\omega:j\in[n]\}$.
\end{proof}

\begin{lemma}
	\label{lem-kl-finite}
	Assume $K$ and $L$ are context-free languages such that $\blim{K} = \{u_iv_i^\omega ~|~ 1\leq i\leq k\}$ and $\blim{L} = \{u'_jw_j^\omega ~|~ 1\leq j\leq \ell \}$ are finite sets of regular words. Then it is decidable whether $\blim{KL}$ is finite and if so, then it is a computable (finite) set of regular words. 
\end{lemma}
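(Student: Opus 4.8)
The plan is to reduce the whole problem to the product formula of Lemma~\ref{lem-limits-of-product} and then to dispatch each resulting piece by Lemma~\ref{lem-kvomega-is-finite}. First I would handle the degenerate cases: emptiness of a context-free language is decidable, and if $K=\emptyset$ or $L=\emptyset$ then $KL=\emptyset$, so $\blim{KL}=\emptyset$ is (trivially) finite and computable. From then on I may assume $K,L\neq\emptyset$ and invoke Lemma~\ref{lem-limits-of-product}, which yields
\[
\blim{KL}=\blim{K}\cup K\blim{L}.
\]
Since $\blim{K}$ is a finite set by hypothesis, $\blim{KL}$ is finite if and only if $K\blim{L}$ is finite; and in the finite case $\blim{KL}$ is obtained simply by adjoining the given members of $\blim{K}$ to whatever we compute for $K\blim{L}$.

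Next I would expand $K\blim{L}$ using the explicit form of $\blim{L}$, writing
\[
K\blim{L}=\bigcup_{j=1}^{\ell}(Ku'_j)\,w_j^\omega.
\]
A finite union of sets is finite exactly when each of its members is finite, so it suffices to analyze each index $j$ separately. For a fixed $j$, the language $Ku'_j$ is context-free, since the context-free languages are effectively closed under concatenation with a fixed word on the right, and $w_j$ is a nonempty word because $u'_jw_j^\omega$ is a regular word. Hence I can apply Lemma~\ref{lem-kvomega-is-finite} to the context-free language $Ku'_j$ and the nonempty word $w_j$: this decides whether $(Ku'_j)w_j^\omega$ is finite and, when it is, effectively enumerates its (regular-word) members.

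Putting the pieces together, I would run this decision for each of the $\ell$ languages $(Ku'_j)w_j^\omega$; then $K\blim{L}$, and hence $\blim{KL}$, is finite precisely when all of them come out finite. In that case I collect the enumerated members of each $(Ku'_j)w_j^\omega$ together with the given members of $\blim{K}$ to obtain $\blim{KL}$ as an explicit finite set of regular words. There is no deep obstacle here; the only genuinely delicate point is checking that the hypotheses of Lemma~\ref{lem-kvomega-is-finite} are met, namely that $Ku'_j$ is really context-free and that $w_j\neq\varepsilon$, and both become immediate once the empty-language cases have been separated off. The remainder is routine bookkeeping over the finitely many indices $i$ and $j$.
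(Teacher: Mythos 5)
Your proof is correct and takes essentially the same route as the paper's: decompose $\blim{KL}=\blim{K}\cup K\blim{L}$ via Lemma~\ref{lem-limits-of-product}, rewrite $K\blim{L}$ as the finite union $\bigcup_{j\in[\ell]}(Ku'_j)w_j^\omega$, and decide finiteness of each piece via Lemma~\ref{lem-kvomega-is-finite}. Your explicit treatment of the empty-language cases and the check that $w_j\neq\varepsilon$ are minor careful additions that the paper leaves implicit.
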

\begin{proof}
	By Lemma~\ref{lem-limits-of-product} $\blim{KL}=\blim{K}\cup K\blim{L}$. Since $\blim{K}$ is a finite set
	(and is of course computable since it is given as input),
	we only have to deal with $K\cdot\blim{L}$.
	Since \[K\cdot\blim{L}=K\cdot\bigl\{u_j'w_j^\omega:j\in[\ell]\bigr\}=\mathop\bigcup\limits_{j\in[\ell]}(Ku_j')w_j^\omega\]
	and this union is finite if and only if so is each language $(Ku'_j)w_j^\omega$, which is decidable by
	Lemma~\ref{lem-kvomega-is-finite} (since the languages $Ku'_j$ are each context-free), we get decidability
	and even computability if each of them is finite.		
\end{proof}

\begin{corollary}
	\label{cor-finite-alpha-limit}
	Assume $n\geq 0$ and $X_1,\ldots,X_n\in N\cup\Sigma$ are symbols so that for each $X_i$,
	$\blim{X_i}$ is a known finite set.
	Then it is decidable whether $L(X_1\ldots X_n)$ has a finite number of limits and if so,
	$\blim{X_1\ldots X_n}$ is effectively computable.
\end{corollary}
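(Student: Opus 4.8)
The plan is to follow the same recursive strategy as Corollary~\ref{cor-unique-alpha-limit}, but with the ``unique limit'' building block Lemma~\ref{lem-x1x2} replaced by its ``finitely many limits'' counterpart, Lemma~\ref{lem-kl-finite}. First I would dispose of the degenerate cases: if $n=0$ then $L(X_1\ldots X_n)=\{\varepsilon\}$ is finite and has no limits, and if $n=1$ the set $\blim{X_1}$ is already given, so I may assume $n\geq 2$. I interpret the hypothesis as saying each $\blim{X_i}$ is known as a finite set of \emph{regular} words, since that is precisely the input format Lemma~\ref{lem-kl-finite} requires (and $\blim{X_i}=\emptyset$ when $X_i$ is a terminal). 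Unlike in Corollary~\ref{cor-unique-alpha-limit}, terminals then need no separate treatment, because Lemma~\ref{lem-kl-finite} already admits arbitrary context-free arguments rather than just nonterminals.

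For the recursion I would track the partial products $M_i=L(X_iX_{i+1}\ldots X_n)$ for $i=n,n-1,\ldots,1$, noting that $M_i=L(X_i)\cdot M_{i+1}$ and $M_1=L(X_1\ldots X_n)$. The base of the recursion is $M_n=L(X_n)$, whose limit set $\blim{X_n}$ is a known finite set of regular words. Proceeding from $i=n-1$ down to $1$, I would apply Lemma~\ref{lem-kl-finite} with $K=L(X_i)$ and $L=M_{i+1}$: the former is context-free with the known finite limit set $\blim{X_i}$, and the latter is context-free with the finite limit set computed at the previous step. Lemma~\ref{lem-kl-finite} then decides whether $\blim{M_i}$ is finite and, if so, returns it as a computable finite set of regular words, so the invariant ``$\blim{M_{i+1}}$ is a known finite set of regular words'' is preserved and after the last step we obtain $\blim{M_1}=\blim{X_1\ldots X_n}$. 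Here I should keep two things honest: that the intermediate limit sets stay finite sets of regular words (which the output of Lemma~\ref{lem-kl-finite} guarantees, matching its input requirement), and that each language fed into that lemma is nonempty -- which holds since terminals give singletons, nonterminals of $G$ generate infinite languages, and $M_{i+1}$, being a product of such nonempty languages, is itself nonempty.

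The one point needing an explicit argument is the early-termination claim: if at some step Lemma~\ref{lem-kl-finite} reports that $\blim{M_i}$ is infinite, I want to conclude immediately that $\blim{X_1\ldots X_n}$ is infinite. This follows from the monotonicity of infiniteness under left concatenation. By Lemma~\ref{lem-limits-of-product}, $\blim{M_{i-1}}=\blim{L(X_{i-1})}\cup L(X_{i-1})\blim{M_i}$; fixing any word $u\in L(X_{i-1})$, the map $w\mapsto uw$ is injective, so $u\cdot\blim{M_i}\subseteq\blim{M_{i-1}}$ is already infinite, and iterating this observation up to $M_1$ shows the full product has infinitely many limits. Hence the procedure may halt and report ``infinitely many limits'' as soon as a single intermediate limit set turns out to be infinite. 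I do not expect a genuine obstacle here: all the conceptual work already sits in Lemma~\ref{lem-limits-of-product} (the structure of $\blim{KL}$) and Lemma~\ref{lem-kl-finite} (decidability and computability of finiteness of $\blim{KL}$), and this corollary only has to chain $n-1$ invocations of the latter while propagating the monotonicity of infiniteness.
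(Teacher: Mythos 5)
Your proposal is correct and follows essentially the same route as the paper: the paper introduces fresh nonterminals $Y_i$ with $Y_i\to X_iY_{i+1}$ and applies Lemma~\ref{lem-kl-finite} (or Corollary~\ref{cor-la-al} for letters) from $Y_{n-1}$ down to $Y_1$, which is exactly your right-to-left iteration over the suffix products $M_i$. Your explicit justification of early termination when an intermediate limit set is infinite (via injectivity of $w\mapsto uw$ in Lemma~\ref{lem-limits-of-product}) is a detail the paper leaves implicit, and is a welcome addition rather than a deviation.
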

\begin{proof}
	Let us introduce the fresh nonterminals $Y_1,\ldots,Y_{n-1}$ and productions
	$Y_1\to X_1Y_2$, $Y_2\to X_2Y_3$,\ldots, $Y_{n-1}\to X_{n-1}X_n$. Applying Lemma~\ref{lem-kl-finite}
	or Corollary~\ref{cor-la-al} (depending on whether $X_i$ is a nonterminal or a letter)
	for the nonterminals $Y_{n-1}$, $Y_{n-2}$, \ldots, $Y_1$ in this order we can decide whether
	each $L(Y_i)$ has a finite number of limits, and if so, we compute $\blim{Y_i}$ as well,
	proving the statement since $L(X_1\ldots X_n)=L(Y_1)$.
\end{proof}

\begin{corollary}
	\label{lem-limx-finite}
	It is decidable for any nonterminal $X$ whether $L(X)$ has a finite number of limits.
\end{corollary}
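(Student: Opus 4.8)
The plan is to proceed by well-founded induction on $\prec$, strengthening the statement so that whenever $\blim{X}$ turns out to be finite we also compute it explicitly as a finite set of regular words. Fix a nonterminal $X$ and assume the (strengthened) claim for every $Y\prec X$. The first move is a monotonicity reduction: if some $Y\prec X$ has an infinite $\blim{Y}$, then $\blim{X}$ is infinite as well. Indeed, $Y\prec X$ gives $X\Rightarrow^* uYv$ for some $u,v\in\Sigma^*$, hence $uL(Y)v\subseteq L(X)$; by Corollary~\ref{cor-uLv} we have $\blim{uL(Y)v}=u\cdot\blim{Y}$, which is infinite, and limits are monotone under language inclusion (immediate from the definition of a limit, cf.\ the easy direction of Lemma~\ref{lem-union}), so $\blim{X}\supseteq u\cdot\blim{Y}$ is infinite and we may answer ``infinite''. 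Otherwise we may assume from now on that every $Y\prec X$ has a finite, already computed limit set.

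If $X$ is nonrecursive, then every symbol occurring in an alternative $X\to\alpha_i$ is strictly $\prec X$ (otherwise a derivation through that symbol would make $X$ recursive), so each such symbol has a known finite limit set. Corollary~\ref{cor-finite-alpha-limit} then decides for each alternative whether $\blim{\alpha_i}$ is finite and computes it when so, and by Lemma~\ref{lem-union} we have $\blim{X}=\bigcup_i\blim{\alpha_i}$, which is finite exactly when all of the $\blim{\alpha_i}$ are. This settles the nonrecursive case.

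The recursive case is the heart of the argument, and here the decisive simplification is Lemma~\ref{lem-recursive-one-or-infinite}: for a recursive $X$ the set $\blim{X}$ always contains $u_X^\omega$ and is \emph{either} the singleton $\{u_X^\omega\}$ \emph{or} infinite, so deciding finiteness of $\blim{X}$ is literally the same as deciding whether $X$ has a unique limit. I would therefore follow the case analysis from the proof of Lemma~\ref{lem-recursive-x}: test whether $L(X)$ is a prefix chain (if so, its supremum is the unique limit and $\blim{X}$ is finite); if $L(X)$ is not a prefix chain, use Lemma~\ref{lem-recursive-nonprefix-alpha-x-beta} to detect a component production whose tail contains a nonterminal, which forces two and hence infinitely many limits; and finally verify, via Lemma~\ref{lem-uvomega-islimit-decidable} together with the computable limits of the escaping alternatives (obtained from Corollary~\ref{cor-finite-alpha-limit} and Lemma~\ref{lem-sequence}), that every limit contributed by an escaping production coincides with $u_X^\omega$, the soundness of this final uniqueness check resting on Lemma~\ref{lem-finite-strict-unique-limit}.

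The \textbf{main obstacle} is that Lemma~\ref{lem-recursive-x} was proved under the standing hypothesis that $L(G)$ is scattered, whereas the present statement places no such restriction on $G$. Scatteredness was used to guarantee both the very existence of the period $u_X$ and the clean shape of the component productions (each of the form $X'\to\alpha X''u$ with $L(\alpha)\subseteq u_{X',X''}u_{X''}^*$). The genuine work, then, is to replace each of these structural guarantees by an explicit decidable test and to prove that whenever a test fails the recursive language $L(X)$ really does have at least two—hence, by Lemma~\ref{lem-recursive-one-or-infinite}, infinitely many—limits. Concretely, I expect first to decide whether the left return words admit a common primitive period $u_X$ (incompatible return words generate infinitely many pairwise-distinct limits), after which the remaining checks of Lemma~\ref{lem-recursive-x} can be run; proving that this battery of tests is simultaneously sound and complete \emph{without} the global scatteredness assumption is where the care is required.
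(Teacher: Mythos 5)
Your core argument coincides with the paper's proof: for recursive $X$, Lemma~\ref{lem-recursive-one-or-infinite} reduces finiteness of $\blim{X}$ to uniqueness of the limit, which is decided by (the method of) Lemma~\ref{lem-recursive-x}; for nonrecursive $X$ (which, by the paper's normal form, can only be $S$), every symbol occurring in an alternative is $\prec X$, so Corollary~\ref{cor-finite-alpha-limit} together with Lemma~\ref{lem-union} finishes the proof. Your additional wrapper --- well-founded induction on $\prec$ with the monotonicity reduction via Corollary~\ref{cor-uLv} --- is correct and in fact slightly tidier than the paper's write-up (it makes explicit why a symbol with infinitely many limits inside an alternative forces $\blim{X}$ to be infinite, a point the paper passes over quickly), but it is logically redundant, since Lemma~\ref{lem-recursive-x} already performs that induction internally.

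Where you genuinely diverge is the ``main obstacle'' you identify, and the observation itself is correct: the corollary is stated for an arbitrary grammar in normal form, while Lemma~\ref{lem-recursive-x} assumes $L(G)$ is scattered, and the word $u_X$ is only guaranteed to exist (and be computable) under that assumption; the paper's own proof silently ignores this mismatch. However, the repair you sketch --- deciding from scratch whether the return words of $X$ share a common primitive period, re-deriving the structure of component productions, and proving such a battery of tests sound and complete without scatteredness --- is both heavier than necessary and left unexecuted, so as written your recursive case is incomplete exactly at the point where you say ``care is required.'' A much lighter repair stays inside the paper's toolkit: first run the known decision procedure for scatteredness of $L(X)$~\cite{10.1007/978-3-642-22321-1_19}. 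If $L(X)$ is not scattered, answer ``infinitely many limits''; this is sound because a language with finitely many limits is scattered of rank at most one, which is the purely order-theoretic content of the induction proving Theorem~\ref{thm-finite-limits-imply-computability} and does not depend on this corollary, so no circularity arises. If $L(X)$ is scattered, then $u_X$ and the words $u_{X,X'}$ exist and Lemma~\ref{lem-recursive-x} applies literally, after which your argument (and the paper's) goes through verbatim.
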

\begin{proof}
	There are two cases: either $X$ is recursive or not. If $X$ is recursive, then by Lemma~\ref{lem-recursive-one-or-infinite}
	$L(X)$ has a finite number of limits if and only it has a unique limit which is decidable due to Lemma~\ref{lem-recursive-x}.
	Hence we can decide for each nonterminal $X\neq S$ whether $\blim{ L(X) }$ is finite.
	
	Now suppose $X$ is not recursive (thus $X=S$ as $G$ is in normal form)
	and let $\balpha_1,\ldots,\balpha_t$ be all the alternatives of $S$.	
	By Corollary~\ref{cor-finite-alpha-limit} for each $\alpha_i$ $(1\leq i\leq t)$ it is decidable whether $L(\alpha_i)$ has a finite number of limits. If one of them has infinite limits than so has $X$.
	So we can assume that each $L(\alpha_i)$ has a finite number of limits and we can
	even compute each $\blim{\alpha_i}$.
	By Lemma~\ref{lem-union}, $\blim{X} = \bigcup_{i\in [t]}\blim{\alpha_i}$.
\end{proof}

\begin{theorem}
	\label{thm-finite-limits-imply-computability}
	Suppose $L$ is a context-free language having a finite number of limits.
	Then $o(L)$ is effectively computable and is scattered of rank at most one.
\end{theorem}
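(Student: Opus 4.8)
The plan is to cut $L$ into finitely many context-free blocks, each having at most one limit, and then apply Lemma~\ref{lem-unique-limit-embeds-into-omega-plus-minusomega} to each block. Since $L$ is context-free with finitely many limits, I first invoke the machinery of this section (Corollary~\ref{lem-limx-finite} together with Lemmas~\ref{lem-union} and~\ref{lem-kl-finite}) to compute the finite set $\blim{L}=\{w_1,\ldots,w_m\}$ as regular words, listed in increasing lexicographic order $w_1<_\ell\cdots<_\ell w_m$. If $m=0$ then $L$ is finite by Lemma~\ref{lem-infinite}, so $o(L)$ is a computable natural number of rank $0$; if $m=1$ the claim is exactly Lemma~\ref{lem-unique-limit-embeds-into-omega-plus-minusomega}. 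So I may assume $m\geq 2$.

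Next I insert computable separators between consecutive limits. For each $1\leq j<m$ the distinct $\omega$-words $w_j<_\ell w_{j+1}$ first differ at some position, so writing $w_j=pa\ldots$ and $w_{j+1}=pb\ldots$ with $a<b$, the \emph{finite} word $z_j:=pb$ satisfies $w_j<_\ell z_j<_\ell w_{j+1}$ and is clearly computable. I then set $P_1:=L_{<z_1}$, $P_j:=\{u\in L: z_{j-1}\leq_\ell u<_\ell z_j\}$ for $2\leq j\leq m-1$, and $P_m:=\{u\in L: z_{m-1}\leq_\ell u\}$. Each defining lexicographic interval is regular because every $z_j$ is a fixed finite word, so each $P_j$ is context-free and effectively constructible; moreover the $P_j$ partition $L$ into lexicographically increasing blocks $P_1<_\ell P_2<_\ell\cdots<_\ell P_m$, whence $o(L)=\sum_{j=1}^m o(P_j)$.

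The crucial step is to prove $\blim{P_j}\subseteq\{w_j\}$. Since $P_j\subseteq L$, every limit of $P_j$ is some $w_k$, and it suffices to rule out $k\neq j$ by a \emph{confinement} argument. If $k<j$ then $w_k\leq_\ell w_{j-1}<_\ell z_{j-1}$, and any word sharing with $w_k$ a prefix longer than the (finite) position where $w_k$ diverges from $z_{j-1}$ must compare with $z_{j-1}$ exactly as $w_k$ does, hence is $<_\ell z_{j-1}$ and lies outside $P_j$; so all but finitely many words clustering at $w_k$ escape $P_j$, and $w_k\notin\blim{P_j}$. The case $k>j$ is symmetric using the upper separator $z_j$ (treating separately the subcase where $z_j$ is a prefix of $w_k$), and the boundary blocks $P_1,P_m$ are handled with the single relevant separator. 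I expect this bookkeeping to be the main obstacle, since one must split the lexicographic comparison into its $<_p$ and $<_s$ subcases and argue carefully at the two ends of the list.

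Once each $P_j$ is known to have at most one limit, I conclude as follows. If $P_j$ is finite, $o(P_j)$ is a computable natural number; otherwise $P_j$ is infinite, so by Lemma~\ref{lem-infinite} it has a limit, which can only be $w_j$. Being context-free with this computable unique limit, $P_j$ falls under Lemma~\ref{lem-unique-limit-embeds-into-omega-plus-minusomega}, giving $o(P_j)=o((P_j)_{<w_j})+o((P_j)_{>w_j})$ with $o((P_j)_{<w_j})\preceq\omega$, $o((P_j)_{>w_j})\preceq-\omega$, both effectively computable. Thus each $o(P_j)$ is a finite sum of copies of $1$, $\omega$ and $-\omega$ and so lies in $H_1$; since $H_1$ is closed under finite ordered sum, $o(L)=\sum_{j=1}^m o(P_j)$ is scattered of rank at most one and is a computable finite sum of the order types $1$, $\omega$ and $-\omega$, as required.
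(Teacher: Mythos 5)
Your proof is correct, and it takes a genuinely different route from the paper's. The paper argues by induction on the number of limits: it computes a single regular limit $w=uv^\omega$ via Lemma~\ref{lem-sequence}, splits $L$ into $L_{<w}$ and $L_{>w}$, and recurses, inserting a finite cut-point $uv^n$ only in the one delicate case where $w=\bigvee L_{<w}$ is the largest limit of $L$ but $L_{<w}$ still has at least two limits; the full set of limits is never computed, and only decidability facts steer the recursion. You instead compute all of $\blim{L}=\{w_1<_\ell\cdots<_\ell w_m\}$ up front, place a finite separator $z_j$ strictly between consecutive limits, and partition $L$ in one shot into context-free blocks $P_j$, each shown to have at most one limit; your confinement argument is sound (choosing a prefix of $w_k$ extending past its divergence point from the relevant separator defeats the limit condition, and you correctly isolate the subcase $z_j<_pw_k$), after which Lemma~\ref{lem-unique-limit-embeds-into-omega-plus-minusomega} finishes each block and yields $o(L)$ as a computable finite sum of $1$, $\omega$ and $-\omega$. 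Two caveats, neither fatal. First, you need more than the literal statement of Corollary~\ref{lem-limx-finite}, which only asserts decidability of finiteness: you need $\blim{L}$ to be computable as a finite set of regular words. Its proof does deliver this (Lemma~\ref{lem-recursive-one-or-infinite} pins the unique limit of a recursive nonterminal to $u_X^\omega$, and Corollary~\ref{cor-finite-alpha-limit} together with Lemma~\ref{lem-union} handles the start symbol), so this is a citation imprecision rather than a gap. Second, that machinery rests on Lemma~\ref{lem-recursive-x}, whose hypothesis is that $L(G)$ is scattered; the paper's own proof shares this reliance, and your partition argument read non-effectively already shows that finitely many limits forces each block to embed into $\omega+(-\omega)$, hence $L$ is scattered and the hypothesis is met. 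On balance, your decomposition is flatter and makes the rank-one bound transparent, while the paper's recursion avoids computing the whole limit set at the price of a more intricate case analysis.
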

\begin{proof}
	We prove the statement by induction on the number of limits.
	
	If $L$ has no limits, then it is finite by Lemma~\ref{lem-infinite}, and so $o(L)=|L|$ is computable.
	
	If $L$ has a unique limit, then $o(L)$ can be embedded into $\omega+-\omega$ and is computable
	by Lemma~\ref{lem-unique-limit-embeds-into-omega-plus-minusomega}. Moreover, it is decidable
	whether $L$ has a unique limit by Lemma~\ref{lem-recursive-x}.
	
	Now assume $L$ has at least two limits. Since $L$ is infinite, we can compute a regular limit
	of the form $w=uv^\omega$ for $L$ by Lemma~\ref{lem-sequence}.
	By Lemma~\ref{lem-uvomega-islimit-decidable}, it is decidable whether $w$ is a limit
	of either $L_{<w}$ or $L_{>w}$ or both of them. (By Lemma~\ref{lem-union}, $w$ is a limit of 
	at least one of them.) If $w$ is not a limit of $L_{<w}$ ($L_{>w}$, resp.), then this language
	has a smaller number of limits than $L$ and we can proceed by induction. Suppose now $w$ is a limit of
	$L_{<w}$ -- it has to be $w=\bigvee L_{<w}$ then.
	If $L$ has a limit which is larger than $w$
	(that is, $L_{>w}$ is infinite and either $w$ is not a limit of $L_{>w}$ or $L_{>w}$ has at least two
	limits -- this is decidable as well),
	then $L_{<w}$ has a smaller number of limits than $L$
	(since no limit of $L_{<w}$ can be strictly larger than its supremum)
	and we can proceed again by induction and get that $L_{<w}$ is computable.
	It is also decidable whether $L_{<w}$ has only one limit and if so, its order type is also computable
	and we are done.
	
	The last case is when $w=\bigvee L_{<w}$ is the largest limit of $L$ and $L_{<w}$ has at least two limits.
	Thus, there exists some limit $w'$ of $L_{<w}$ and an integer $n\geq 0$ such that $w'<_suv^n$,
	or equivalently, $L_{<uv^n}$ is infinite for some $n\geq 0$. We can compute (say, the least) such $n$
	by starting from $n=0$ and iterating, eventually we will find an integer $n$ with this property.
	Then, $L_{<uv^n}$ has a smaller number of limits than $L_{<w}$ so we can use induction and compute $o(L_{<uv^n})$;
	also, ${L_{<w}}_{\geq uv^n}$ also has a smaller number of limits than $L_{<w}$ (since $w'$ is missing) and we can
	apply induction to this half as well and compute its order type. Then, $o(L_{<w})$ is the sum of
	the two already computed order types.
	
	Repeating the same argument (by appropriate modifications: taking infimum instead of supremum,
	splitting the case when $w$ is the least limit of $L$) we get that $o(L_{>w})$ is also computable,
	and $o(L)$, being the sum $o(L_{<w})+o(L_{>w})$, is hence computable as well.
	
	We also got that the order type of such a language has to be a finite sum of
	the order types $\omega$, $-\omega$ and $1$, that is, has to have rank at most $1$.
\end{proof}

\begin{corollary}
	\label{cor-hausdorff-one-computable}
	Suppose $L$ is a scattered context-free language of rank at most one.	
	Then $o(L)$ is effectively computable.
\end{corollary}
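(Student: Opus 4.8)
The plan is to reduce the corollary to Theorem~\ref{thm-finite-limits-imply-computability} by establishing the converse of the rank statement contained there: a scattered context-free language of rank at most one has only finitely many limits. Once this is in hand, Theorem~\ref{thm-finite-limits-imply-computability} applies verbatim and yields that $o(L)$ is effectively computable. So the whole content of the corollary is the implication ``rank at most one $\Rightarrow$ finitely many limits''.

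First I would pin down the shape of rank-$\leq 1$ order types. Unwinding the definition of $H_1$, the generators are the sums $\sum_{i\in\mathbb{Z}}Q_i$ with every $Q_i$ finite; depending on how many of the $Q_i$ are nonempty, such a sum has order type a finite number, $\omega$, $-\omega$ or $\zeta$, and since $H_1$ is closed under finite sum and $\zeta=(-\omega)+\omega$, the order types of rank at most one are exactly the finite sums $o_1+\cdots+o_n$ with each $o_i\in\{1,\omega,-\omega\}$. Consequently $L$ splits into finitely many consecutive intervals $L=L_1\cup\cdots\cup L_n$ (as plain sets; context-freeness of the individual pieces is not needed) with $o(L_i)\in\{1,\omega,-\omega\}$.

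Next I would bound the number of limits of each piece. By Lemma~\ref{lem-union} we have $\blim{L}=\bigcup_{i}\blim{L_i}$, so it suffices to see that each $\blim{L_i}$ is finite. If $o(L_i)=1$, the interval is a single word, hence finite, hence has no limit by Lemma~\ref{lem-infinite}. If $o(L_i)=\omega$, then $L_i$ is a $<_\ell$-increasing chain $u_0<_\ell u_1<_\ell\cdots$; its supremum is a limit by Lemma~\ref{lem-sup-is-limit}, and I claim it is the \emph{only} limit. Indeed, if $w$ were a limit, then every $w_0<_p w$ is a proper prefix of infinitely many $u_i$, and since the $u_i$ increase, these witnesses must be cofinal in the chain; comparing the first position at which $w$ would differ from the supremum $s=\bigvee u_i$ and using that an increasing chain cannot have infinitely many members bounded away from its supremum yields a contradiction unless $w=s$. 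The case $o(L_i)=-\omega$ is symmetric, with the infimum in place of the supremum. Hence each $\blim{L_i}$ has at most one element and $\blim{L}$ is finite.

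The main obstacle is this last uniqueness argument: the order-theoretic hypothesis ``$L_i$ has type $\omega$'' must be transferred to the prefix-based notion of limit, and the care lies precisely in ruling out spurious limits $w\neq s$. After that the corollary is immediate: $L$ has finitely many limits, so by Theorem~\ref{thm-finite-limits-imply-computability} its order type is effectively computable (and is a finite sum of $1$, $\omega$ and $-\omega$, consistently with the rank hypothesis).
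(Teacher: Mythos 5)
Your proof is correct and takes essentially the same route as the paper: decompose the rank-$\leq 1$ language into finitely many consecutive intervals of order type $1$, $\omega$ or $-\omega$, note that each piece has at most one limit, conclude via Lemma~\ref{lem-union} that $\blim{L}$ is finite, and invoke Theorem~\ref{thm-finite-limits-imply-computability}. The only difference is that you work out the uniqueness of the limit for pieces of type $\omega$ and $-\omega$ (which the paper simply asserts), and your sketch of that step is sound.
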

\begin{proof}
	If $o(L)\in\{\omega,-\omega\}$, then $L$ has one limit, while if $o(L)$ is finite, then it has no limits.
	Since scattered order types of rank at most one are finite sums of the order types
	$\omega$, $-\omega$ and $1$, thus scattered languages of rank at most one are finite unions of
	languages of order type $\omega$, $-\omega$ or $1$, by Lemma~\ref{lem-union} we get that
	such languages have a finite number of limits, and thus their order type is effectively computable
	by Theorem~\ref{thm-finite-limits-imply-computability}.
\end{proof}

\begin{corollary}
	\label{cor-main}
	For any context-free language $L$, it is decidable whether $L$ is a scattered language of rank at most one,
	and if so, $o(L)$ can be effectively computed (as a finite sum of the order types $1$, $\omega$ and $-\omega$).
\end{corollary}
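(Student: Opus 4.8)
The plan is to reduce the corollary to two results already in hand: the decidability of having finitely many limits (Corollary~\ref{lem-limx-finite}) and the computability of the order type under that hypothesis (Theorem~\ref{thm-finite-limits-imply-computability}). The linchpin is the equivalence, for an arbitrary language, between being scattered of rank at most one and possessing only finitely many limits. First I would fix a grammar $G$ for $L$ and bring it into the normal form assumed throughout the paper, which preserves $L=L(S)$; I would dispatch the finite case separately, since a finite language is scattered of rank $0$, has no limits by Lemma~\ref{lem-infinite}, and its order type $|L|$ is computable (finiteness of a context-free language being decidable).

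Next I would establish the equivalence. The ``if'' direction is exactly Theorem~\ref{thm-finite-limits-imply-computability}: a context-free language with finitely many limits is scattered of rank at most one. For the ``only if'' direction I would argue as in the proof of Corollary~\ref{cor-hausdorff-one-computable}. A scattered order type of rank at most one is a finite sum of the order types $1$, $\omega$ and $-\omega$, so such a language is a finite union of languages each of one of these three order types; each summand has at most one limit, so by Lemma~\ref{lem-union} the whole language has only finitely many limits.

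With the equivalence in place, decidability is immediate: by Corollary~\ref{lem-limx-finite} it is decidable whether $L=L(S)$ has finitely many limits, and by the equivalence this is precisely the property of being scattered of rank at most one. If the answer is positive, Theorem~\ref{thm-finite-limits-imply-computability} yields $o(L)$ effectively, presented as a finite sum of the order types $1$, $\omega$ and $-\omega$; if negative, $L$ fails to be scattered of rank at most one and we report this.

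The bulk of the work has already been discharged by the preceding lemmas, so I do not expect a genuine obstacle. The one place demanding care is the ``only if'' direction of the equivalence: one must invoke the structural classification of rank-at-most-one scattered order types as finite sums of $1$, $\omega$ and $-\omega$, and then check that each building block contributes at most one limit, so that Lemma~\ref{lem-union} caps the total number of limits. A secondary technical point is verifying that the normalization of $G$ does not alter $L$, and that the finite case --- where the normal-form hypothesis that every nonterminal generates an infinite language may fail --- is handled before the machinery is invoked.
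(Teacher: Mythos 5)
Your proof is correct and is essentially the paper's own (implicit) argument: the paper leaves Corollary~\ref{cor-main} without a written proof precisely because it is the combination of Corollary~\ref{lem-limx-finite}, Theorem~\ref{thm-finite-limits-imply-computability}, and the equivalence supplied by Corollary~\ref{cor-hausdorff-one-computable}, which is exactly what you assemble. Your extra care about the finite case and grammar normalization is a sensible (and correct) tidying of details the paper glosses over.
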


\section{Conclusion}
We showed that it is decidable whether a context-free ordering is scattered of rank at most one, and
if so, then its order type is effectively computable as a finite sum of the order types $1$, $\omega$
and $-\omega$. This complements our earlier result~\cite{GelleIvanGandalf} that for scattered
context-free orderings of rank (at least) two, it is undecidable whether their order type
is $\omega+(\omega+\zeta)\times\omega$, thus the order type is not computable, even if the grammar
in question is a so-called prefix grammar.

An interesting question for further study is whether the rank of a scattered context-free ordering
is computable. Another, maybe easier one is to determine which rank-two scattered orderings are
context-free (as there are uncountably many such orderings, the vast majority of them cannot be
context-free).

A relatied notion is that of tree automatic orderings: these are the order types of regular tree languages
equipped with the lexicographic ordering (on trees). Through derivation trees, there is a tight connection
between context-free string languages and regular tree languages but as the two orderings differ (lexicographic
ordering of trees vs their frontiers), it is unclear whether there is a nontrivial inclusion between these
two classes of orderings (or at least for the scattered case).
\bibliography{biblio}{}
\bibliographystyle{splncs04}
\end{document}